\documentclass[11pt]{article}

\usepackage[T1]{fontenc}
\usepackage[utf8]{inputenc}
\usepackage{lmodern}        
\usepackage{iftex}          
\ifPDFTeX                   
  \usepackage[activate={true,nocompatibility},final]{microtype}
\else                       
  \usepackage[protrusion=true,final]{microtype}
\fi
\usepackage{times}

\usepackage[margin=1in]{geometry}

\usepackage{authblk}                    

\usepackage{amsmath}
\usepackage{amssymb}        
\usepackage{mathtools}      
\usepackage{nicefrac}
\allowdisplaybreaks         

\usepackage{amsthm}
\usepackage{bm}             
\usepackage{thm-restate}    

\usepackage{graphicx}
\graphicspath{{figures/}{./}}                       
\usepackage{booktabs}                               
\usepackage{multirow,makecell,array}
\usepackage[font=small,labelfont=bf]{caption}       
\usepackage{subcaption}                             
\usepackage{algorithm}
\usepackage{algpseudocode}                          

\usepackage{enumitem}
\setlist[itemize]{leftmargin=2.2em,itemsep=2pt,topsep=2pt}
\setlist[enumerate]{leftmargin=2.2em,itemsep=2pt,topsep=2pt}

\usepackage{xcolor}
\definecolor{LinkColor}{rgb}{0.10,0.40,0.75}        
\definecolor{CiteColor}{rgb}{0.70,0.25,0.20}        
\definecolor{UrlColor} {rgb}{0.20,0.50,0.50}        
\definecolor{TodoColor}{rgb}{0.80,0.30,0.10}        

\usepackage{tikz}
\usetikzlibrary{positioning,calc,arrows.meta}

\usepackage[round,sort&compress]{natbib}

\usepackage{url}
\usepackage{hyperref}
\hypersetup{
  colorlinks=true,
  linkcolor=LinkColor,
  citecolor=CiteColor,
  urlcolor=UrlColor,
  breaklinks=true,
  bookmarksnumbered=true,
}
\usepackage{bookmark}                               
\usepackage[capitalise,nameinlink,noabbrev,sort&compress]{cleveref}  

\numberwithin{equation}{section}

\newif\ifdraft \draftfalse
\ifdraft
  \usepackage{lineno}\linenumbers
  \newcommand{\todo}[1]{\textcolor{TodoColor}{\textbf{[TODO:}~#1\textbf{]}}}
\else
  \newcommand{\todo}[1]{}                            
\fi

\newcommand{\R}{\mathbb{R}}

\newcommand{\E}{\mathbb{E}}                       
\newcommand{\mc}[1]{\mathcal{#1}}                 

\newcommand{\eps}{\varepsilon}

\newcommand{\defeq}{\coloneqq}                    

\DeclareMathOperator{\tr}{tr}

\DeclarePairedDelimiter{\abs}{\lvert}{\rvert}
\DeclarePairedDelimiter{\norm}{\lVert}{\rVert}
\DeclarePairedDelimiter{\set}{\{}{\}}

\DeclarePairedDelimiterX{\inner}[2]{\langle}{\rangle}{#1,#2}   

\newcommand{\Poly}{P}                              
\newcommand{\Bone}{B_1^d}                           
\newcommand{\Binf}{B_\infty^d}                      
\newcommand{\Bp}{B_p^d}                             
\newcommand{\gauge}[2]{\lVert #1\rVert_{#2}}         
\newcommand{\ver}{v}                                
\newcommand{\Dt}{D_t}                               
\newcommand{\Mt}{M_t}                               
\newcommand{\simplex}{\Delta}                        
\newcommand{\dist}{\kappa}                            
\newcommand{\distnn}{\kappa^{\mathrm{nn}}}            
\newcommand{\designbd}{\beta}                         
\newcommand{\ratio}{\rho}                             
\newcommand{\Rad}{S}                                 
\newcommand{\kflat}[1]{x^{(#1)}}                     

\theoremstyle{plain}
\newtheorem{theorem}{Theorem}[section]

\newtheorem{fact}[theorem]{Fact}

\theoremstyle{definition}

\theoremstyle{remark}
\newtheorem{remark}[theorem]{Remark}

\crefname{theorem}{Theorem}{Theorems}          \Crefname{theorem}{Theorem}{Theorems}
\crefname{proposition}{Proposition}{Propositions}
\Crefname{proposition}{Proposition}{Propositions}
\crefname{lemma}{Lemma}{Lemmas}                \Crefname{lemma}{Lemma}{Lemmas}
\crefname{corollary}{Corollary}{Corollaries}   \Crefname{corollary}{Corollary}{Corollaries}
\crefname{conjecture}{Conjecture}{Conjectures} \Crefname{conjecture}{Conjecture}{Conjectures}
\crefname{fact}{Fact}{Facts}                   \Crefname{fact}{Fact}{Facts}
\crefname{definition}{Definition}{Definitions} \Crefname{definition}{Definition}{Definitions}
\crefname{assumption}{Assumption}{Assumptions} \Crefname{assumption}{Assumption}{Assumptions}
\crefname{example}{Example}{Examples}          \Crefname{example}{Example}{Examples}
\crefname{problem}{Problem}{Problems}          \Crefname{problem}{Problem}{Problems}
\crefname{remark}{Remark}{Remarks}             \Crefname{remark}{Remark}{Remarks}
\crefname{claim}{Claim}{Claims}                \Crefname{claim}{Claim}{Claims}
\crefname{algorithm}{Algorithm}{Algorithms}    \Crefname{algorithm}{Algorithm}{Algorithms}

\title{Low-Degree Polynomial Approximation of the Cross-Polytope}
\author{Xiaoyu Li\\
University of New South Wales\\
\texttt{xiaoyu.li2@unsw.edu.au}}
\date{}

\hypersetup{
  pdftitle={Low-Degree Polynomial Approximation of the Cross-Polytope},
  pdfauthor={Xiaoyu Li}
}

\begin{document}
\maketitle

\begin{abstract}
How much approximation power does polynomial degree buy for a symmetric convex body? Barvinok's lifted
ellipsoid construction shows that degree $2t$ always suffices for distortion $O(\sqrt{d/t})$ when $t\le d$,
but the construction itself does not identify a body that forces this curve throughout the degree range. We
show that the
cross-polytope does. If $\dist_t$ denotes the optimum over positive sum-of-squares forms and $\distnn_t$ the
optimum over all globally nonnegative forms, then
\[
  \frac1{\sqrt{2e}}\sqrt{\frac dt}
  \le \distnn_t(B_1^d)
  \le \dist_t(B_1^d)
  \le \sqrt{2e}\sqrt{\frac dt},
  \qquad 1\le t\le d.
\]
Thus degree $\Theta(d)$ is necessary and sufficient for constant distortion, even when the lower-bound
adversary may use any nonnegative form rather than an SoS certificate.

The lower proof is representation-free. Signed-permutation averaging removes orientation but still leaves a
large invariant algebra. Evaluating instead on the flat points of the $\ell_1$-sphere makes every invariant
monomial forget its partition structure and remember only the support size $k$. Every candidate then has the
profile $V(k)=k^{-2t}Q(k)$, where $\deg Q\le t$ and $Q(0)=0$; after the substitution $u=1/k$, Lagrange
interpolation shows that this degree budget cannot keep the profile nearly level across $d$ support scales. The
matching SoS construction is dictated by the same symmetry: averaging the powers of the sign-vector facet
normals gives the Rademacher form $\E_\varepsilon(\varepsilon^\top x)^{2t}$.

The mechanism also diagnoses what does and does not generalize. The polar cube has a constant certificate at
logarithmic degree, and a two-dimensional quartic shows that its familiar coordinate certificate need not be
optimal. For a general symmetric polytope, weighted facet powers yield a concave boundary-floor program whose
worst-direction oracle reduces to convex dual-norm problems; at degree two it recovers classical optimal design
and the John ellipsoid. This is an oracle-model certificate optimization, not an end-to-end complexity result.
Finally, a weighted reciprocal-grid lemma for $k^{-\alpha}Q(k)$ yields the $\ell_p$ consequences and identifies
$p=2$ as the precise point where the interpolation mechanism stops.

\end{abstract}

\section{Introduction}
\label{sec:intro}

John's theorem says that a quadratic form can approximate every origin-symmetric convex body in $\R^d$ with
loss at most $\sqrt d$. Allowing a homogeneous polynomial of degree $2t$ gives a sequence of increasingly rich
exact-degree algebraic models. The basic quantitative question is then: \emph{what does the additional degree
buy?} If degree is the available approximation budget, how quickly can the best worst-case distortion
available through degree $2t$ fall as $t$ grows?

There is one point of interpretation. The $2t$-th root of a nonnegative form is homogeneous, but it need not be
convex and hence need not be a norm. The quantity relevant here is only its multiplicative sandwich on the
boundary of the body. For a form $p$ that is positive away from the origin, write
\[
  \ratio_K(p^{1/(2t)})
  =
  \frac{\max_{\gauge{x}{K}=1}p(x)^{1/(2t)}}
       {\min_{\gauge{x}{K}=1}p(x)^{1/(2t)}}.
\]
We keep two polynomial cones separate:
\begin{equation}\label{eq:kappadef}
  \dist_t(K)=\inf_{\substack{p\in\Sigma_{d,2t}\\p(x)>0\text{ for }x\ne0}}\ratio_K(p^{1/(2t)}),
  \qquad
  \distnn_t(K)=\inf_{\substack{p\in\mc N_{d,2t}\\p(x)>0\text{ for }x\ne0}}\ratio_K(p^{1/(2t)}),
\end{equation}
where $\Sigma_{d,2t}$ is the sum-of-squares cone and $\mc N_{d,2t}$ is the cone of globally nonnegative forms.
Thus $\distnn_t(K)\le\dist_t(K)$. This distinction will matter in opposite directions: our upper
constructions are SoS, while the principal lower bound holds against the larger nonnegative cone.
Here $t$ specifies an exact homogeneous degree; the feasible classes at successive values of $t$ are not
literally nested, and we do not assume that $\dist_t(K)$ is monotone. When degree is meant as a cumulative
budget, one may instead take the best value over all $s\le t$; the preliminaries make this monotone envelope
explicit. The cross-polytope law below has the same order under either convention.

\paragraph{The universal curve and the missing worst case.}
Barvinok's polynomial approximation theorem, in its all-degree SoS form, gives
\begin{equation}\label{eq:barvinok}
  \dist_t(K)
  \le \binom{d+t-1}{t}^{1/(2t)}
  =O(\sqrt{d/t})
  \qquad (1\le t\le d)
\end{equation}
for every symmetric body~\citep{Barvinok2003ApproxNorm,BarvinokVeomett2008}. This tells us what every body can
achieve, but not whether any body forces us to pay that much degree.

The cube and the cross-polytope make the gap in our understanding concrete. They are polar, share the same
signed-permutation symmetry, and already behave very differently under the most immediate constructions. For
the cube,
\[
  \sum_{i=1}^d x_i^{2t}
  \qquad\text{gives}\qquad
  \dist_t(B_\infty^d)\le d^{1/(2t)},
\]
so logarithmic degree suffices for a constant certificate. For the cross-polytope, by contrast, Barvinok proved
an order-$\sqrt d$ sextic lower bound in the published version of his paper, and Barvinok and Veomett recorded
the quartic and sextic cases~\citep{Barvinok2003ApproxNorm}\citep[Sec.~2.3]{BarvinokVeomett2008}. These
fixed-degree obstructions suggest that the universal curve reflects a genuine worst case, rather than an
artifact of the lifted construction. The question is whether the obstruction persists for every degree, and
whether it depends on the SoS representation.

\paragraph{The degree law.}
Our main result answers both questions.

\begin{theorem}[Cross-polytope degree law; informal version of \cref{thm:main}]
\label{thm:main-informal}
For every $d\ge1$ and $1\le t\le d$,
\[
  \frac1{\sqrt{2e}}\sqrt{\frac dt}
  \le \distnn_t(B_1^d)
  \le \dist_t(B_1^d)
  \le \sqrt{2e}\sqrt{\frac dt}.
\]
\end{theorem}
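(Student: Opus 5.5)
The plan is to prove the two nontrivial inequalities separately; the middle inequality $\distnn_t\le\dist_t$ holds because $\Sigma_{d,2t}\subseteq\mc N_{d,2t}$. The upper bound will use the Rademacher form, and the lower bound will use symmetrization, evaluation at flat points, and an interpolation identity in $u=1/k$.

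\emph{Upper bound.} Take $p(x)=\E_\varepsilon(\varepsilon^\top x)^{2t}$, an average of $2t$-th powers of linear forms. It is SoS and positive for $x\ne0$.

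On the unit sphere of $B_1^d$, $\abs{\varepsilon^\top x}\le\norm{x}_1=1$, so $p\le1$, with equality at the vertices $e_i$.

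For the minimum, observe that $p^{1/(2t)}=\norm{\varepsilon^\top x}_{L_{2t}}$ is convex and invariant under signed permutations. For $x$ on the $\ell_1$-sphere, first flip signs to make $x\ge0$. Averaging over coordinate permutations then produces the flat point $\mathbf 1/d$. By convexity and invariance, $p(\mathbf 1/d)\le p(x)$, so the minimum equals $d^{-2t}\E S^{2t}$, where $S=\sum_i\varepsilon_i$.

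To bound $\E S^{2t}$ from below, I count only the perfect pairings of the $2t$ factors that use $t$ distinct indices. This gives
\[
  \E S^{2t}\ge (d)_t\frac{(2t)!}{2^t t!}.
\]
Using $(d)_t=t!\binom dt\ge t!(d/t)^t$ and $(2t)!\ge(2t/e)^{2t}$, this yields $\E S^{2t}\ge(2td/e^2)^t$. Hence
\[
  \ratio\le \frac{d\,e}{\sqrt{2td}}=\frac{e}{\sqrt2}\sqrt{\frac dt}\le\sqrt{2e}\sqrt{\frac dt}.
\]

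\emph{Lower bound, reduction.} Let $p\in\mc N_{d,2t}$ be positive away from $0$. Replace $p$ by its average over the hyperoctahedral group. The result is still nonnegative and positive, its boundary maximum does not increase, and its boundary minimum does not decrease. So we may assume $p$ is invariant.

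An invariant form is then a combination of monomial symmetric functions $m_\lambda(x_1^2,\dots,x_d^2)$, where $\lambda$ ranges over partitions of $t$. Evaluate at the flat points $\kflat k=\frac1k(\mathbf 1_k,0,\dots,0)$ for $k=1,\dots,d$; these lie on the $\ell_1$-sphere. At such a point each $m_\lambda$ equals $k^{-2t}$ times the number of placements of its $\ell(\lambda)$ parts among $k$ coordinates. That count is a polynomial in $k$ of degree $\ell(\lambda)\le t$ vanishing at $k=0$. Therefore
\[
  V(k)\defeq p(\kflat k)=k^{-2t}Q(k),\qquad \deg Q\le t,\quad Q(0)=0,
\]
and $\ratio^{2t}\ge\max_{k\le d}V(k)/\min_{k\le d}V(k)$.

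Substituting $u=1/k$ gives $V=u^{t}R(u)$ with $\deg R\le t-1$. This is exactly where $Q(0)=0$ is used: it kills the $u^{2t}$ coefficient.

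\emph{Lower bound, interpolation step (main obstacle).} Choose $t+1$ nodes $u_j=1/k_j$ with the $k_j$ spread across $[1,d]$, including $k=d$. Since $\deg R\le t-1$, the $t$-th divided difference of $R$ at these nodes vanishes:
\[
  \sum_j \frac{V(u_j)\,u_j^{-t}}{\prod_{i\ne j}(u_j-u_i)}=0 .
\]
The weights alternate in sign, so the positive-weight and negative-weight contributions must balance. If $V$ lies in $[m,mR]$ on all nodes, this forces $R\ge(\text{total negative weight})/(\text{total positive weight})$, or the reverse.

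The term at the smallest node $u=1/d$ carries the factor $d^{t}$. It should dominate the opposite-sign terms by roughly $(d/t)^t/e^t$, giving $R\ge(d/(2et))^t$ and hence $\ratio\ge\frac1{\sqrt{2e}}\sqrt{d/t}$.

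The delicate part is choosing the nodes so that this weight ratio is provably at least $(d/(2et))^t$. My first attempt is to take the $t+1$ largest admissible integers $k=d,d-1,\dots,d-t$. If that loses too much, I would take the $k_j$ geometrically or arithmetically spaced so that the Vandermonde-type products factor as factorial ratios, and then estimate those with Stirling to extract the constant $1/\sqrt{2e}$.
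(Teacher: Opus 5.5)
The gap is in the lower bound. The interpolation estimate, which is the whole quantitative content of the obstruction, is never carried out, and the one concrete node choice you propose cannot work.

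Your upper bound and the reduction are sound. The Jensen argument placing the minimum at $\mathbf 1/d$ is a clean substitute for the paper's two-coordinate Schur-convexity step. Counting placements of $m_\lambda(x_1^2,\dots,x_d^2)$ is a valid alternative to the paper's power-sum collapse. Your sharper estimate $\E\Rad_d^{2t}\ge(2td/e^2)^t$ gives the sign design the constant $e/\sqrt2<\sqrt{2e}$ directly. The paper's own sign-design estimate only yields $e\sqrt{d/t}$, so it needs Barvinok's bound to reach $\sqrt{2e}$.

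Now the failing node choice. Take nodes $k=d,d-1,\dots,d-t$ and the legitimate invariant form $(\sum_i x_i^2)^t$. It has $V(k)=k^{-t}$, whose ratio over those nodes is only $(d/(d-t))^t\le e^{t^2/(d-t)}$. Your balancing inequality is a valid lower bound on that ratio, so no bound derived from nodes clustered near $k=d$ can reach $(d/(2et))^t$. The profile's oscillation is visible only between small and large support sizes. Beyond that, nothing fixes the nodes or bounds the weights. ``Should dominate \dots\ roughly'' and the deferred geometric or arithmetic spacing are not an argument.

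The missing idea is to place the other $t$ nodes at the \emph{smallest} support sizes $k=1,\dots,t$, so $u_i=1/i$, together with $u_0=1/d$. Your vanishing divided difference is then exactly the Lagrange extrapolation $R(1/d)=\sum_{i=1}^t R(1/i)\,\ell_i(1/d)$. Let $m,s$ be the minimum and maximum of $V$ on $[d]$. Isolating the $u_0$ term gives
\[
  m\,d^t\le V(d)\,d^t=\abs{R(1/d)}\le s\sum_{i=1}^t i^t\abs{\ell_i(1/d)}.
\]
The reciprocal nodes are well separated, $\abs{1/i-1/j}=\abs{i-j}/(ij)$, while $\abs{1/d-1/j}\le 1/j$. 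These give $\abs{\ell_i(1/d)}\le\frac{i^{t-1}}{(t-1)!}\binom{t-1}{i-1}$. Sum using $i^t\le t^t$ and $\sum_i\binom{t-1}{i-1}=2^{t-1}$, then apply $(t-1)!\ge t^{t-1}e^{-t}$. This yields $s/m\ge 2\,(d/(2et))^t$, the claimed constant $1/\sqrt{2e}$. Without this node set, or some other explicit one with a proven weight bound, your lower bound remains a heuristic.
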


Consequently, the cross-polytope matches Barvinok's universal SoS upper bound up to an absolute constant, and
constant distortion requires and is achieved by degree of order $d$. There are two nontrivial uniformities in
the statement. The estimate holds throughout $1\le t\le d$, rather than at a fixed degree, and its lower half
does not inspect an SoS representation at all. It extends the previously recorded fixed-degree cases to the
full degree range while retaining a representation-free lower bound over all globally nonnegative forms, and
hence over the full SoS cone. To the best of our knowledge, this is the first degree-uniform cross-polytope
lower bound over the full cone of globally nonnegative forms; the narrower sums-of-powers precedent and the
classical flat-ray literature are separated explicitly in the related-work discussion below.

The theorem asks for two logically different arguments. The lower bound must simplify \emph{every}
nonnegative form without making it harder to approximate the body. The upper bound needs only one form, but it
must explain the same $\sqrt{d/t}$ scale. Both arguments come from the signed-permutation symmetry, used in
opposite directions.

\paragraph{How the obstruction is isolated.}
A direct attack on the coefficients of an arbitrary nonnegative form retains far more information than the
sandwich ratio can detect. Signed-permutation averaging is the first compression: it erases orientation and
cannot increase the ratio. It does not, however, make the problem one-dimensional. An invariant degree-$2t$
form can still involve many products of even power sums.

The second compression is the decisive one. A lower bound does not require locating the true boundary extrema;
it is enough to find a small set of directions on which every candidate already varies. We choose the flat
points
\[
  x^{(k)}=\frac1k(\underbrace{1,\ldots,1}_{k},0,\ldots,0),
  \qquad k=1,\ldots,d,
\]
because on this family every even power sum depends only on the support size $k$. More importantly, a product
of power sums forgets the underlying partition and remembers only its number of factors. The restriction of
every invariant form therefore has the forced profile
\[
  V(k)=p(x^{(k)})=k^{-2t}Q(k),
  \qquad \deg Q\le t,\quad Q(0)=0.
\]
At this point the high-dimensional geometry has disappeared. Dimension supplies $d$ support scales at which
the profile must remain positive, while degree gives the numerator only $t$ degrees of freedom with which to
compensate for the factor $k^{-2t}$.

The mechanism is already visible when $t=1$: then $V(k)=a/k$, so the vertex and diagonal differ by a factor
$d$, yielding the classical $\sqrt d$ distortion. Higher degree can partially flatten this profile, but it
cannot tune the first $t$ support sizes and the distant scale $k=d$ independently. The change of variables
$u=1/k$ makes this precise: $V(1/u)=u^tR(u)$ with $\deg R\le t-1$, so the values at
$1,1/2,\ldots,1/t$ determine $R(1/d)$. Bounding that extrapolation functional by Lagrange interpolation gives
the lower bound. The candidate's strict positivity away from the origin keeps the sampled profile positive;
the argument never inspects how its global nonnegativity is certified.

\paragraph{How the matching form emerges.}
The upper construction is dictated by the same geometry rather than guessed as an unrelated probabilistic
device. The oriented facet normals of the cross-polytope are the sign vectors
$\varepsilon\in\{\pm1\}^d$; equivalently, opposite pairs define the slabs
$|\varepsilon^\top x|\le1$. Averaging the $2t$-th powers of these functionals gives the canonical symmetric form
\[
  f(x)=\E_\varepsilon[(\varepsilon^\top x)^{2t}].
\]
Schur convexity places its maximum on the $\ell_1$-sphere at a vertex and its minimum at the diagonal. Hence
\[
  \ratio(f^{1/(2t)})
  =\frac{d}{(\E[\Rad_d^{2t}])^{1/(2t)}},
  \qquad \Rad_d=\varepsilon_1+\cdots+\varepsilon_d.
\]
The collision-free pairings in this moment already supply the required scale, while Gaussian comparison gives
the opposite moment estimate. In the regime $t\to\infty$ with $t=o(\sqrt d)$, the displayed ratio is
$\sqrt{e/2}\sqrt{d/t}\,(1+o(1))$. We also retain Barvinok's upper bound because it is uniform beyond this
moment-asymptotic regime.

This gives the central proof map:
\[
\begin{array}{ccccc}
\text{arbitrary form}&\xrightarrow{\text{average}}&
\text{support-size profile}&\xrightarrow{\text{interpolate}}&
\text{lower bound},\\[2mm]
\text{one facet orbit}&\xrightarrow{\text{average}}&
\text{Rademacher form}&\xrightarrow{\text{moments}}&
\text{upper bound}.
\end{array}
\]
The two rows are deliberately asymmetric: the first must compress every adversary without increasing its
ratio, whereas the second needs to manufacture only one candidate with the matching order.

\paragraph{Polarity as a stress test.}
Is the lower-bound mechanism a generic consequence of polarity, or a feature of the $\ell_1$ geometry? The
cube certificate above and our cross-polytope lower bound settle the same-degree comparison without requiring
the cube's exact optimum. At $t=\lceil\log_2d\rceil$, the cube has an SoS certificate of distortion at most
$\sqrt2$, while every globally nonnegative certificate for the cross-polytope has distortion
\[
  \Omega\!\left(\sqrt{\frac d{\log d}}\right).
\]
Thus polarity does not preserve the degree scale at which a prescribed distortion is attained.

This comparison also provides an early scope check. The coordinate form for the cube is sufficient for the
separation, but sufficiency is not optimality. Already for $B_\infty^2$ at $t=2$, the SoS form
\[
  x^4+y^4-x^2y^2=(x^2-y^2)^2+(xy)^2
\]
has distortion $(4/3)^{1/4}<2^{1/4}$. The negative mixed coefficient is unavailable to a nonnegative mixture of
coordinate fourth powers, but it remains compatible with an SoS representation.

\paragraph{What the two proof mechanisms test next.}
The sign design suggests weighting powers of facet normals for a general symmetric polytope. Maximizing the
boundary floor gives a concave, one-sided certificate. In the oracle model, entropic mirror ascent gives an
additive outer-iteration bound, while homogeneity represents each worst-direction oracle by finitely many
convex dual-norm problems. At $t=1$ the program reduces to $G$-optimal design and the John
ellipsoid~\citep{KieferWolfowitz1960}. The quartic example
then supplies the necessary scope check: even a fully symmetric optimized facet design need not be optimal over
the full SoS cone.

The $\ell_p$ family instead tests the weighted flat-profile obstruction. Its exponent is $\alpha=2t/p$, so
$p=2$ is exactly the point where $\alpha$ meets the numerator degree. The resulting theorem gives the exact
quadratic law, exact algebraic resonance points, and matching dimension exponents up to $d^{o(1)}$ for fixed
$p<2$ with $\log t=o(\log d)$. Beyond the threshold, the argument stops without predicting the answer. The
formal statements of both extensions appear in \cref{sec:results}.

\paragraph{Related work and scope.}
The polynomial-approximation framework originates in \citet{Barvinok2003ApproxNorm} and the survey of
\citet{BarvinokVeomett2008}. Polynomial norms and the geometry separating nonnegative forms, sums of squares,
and sums of powers are studied in
\citet{AhmadiDeKlerkHall2019PolyNorms,Blekherman2006Volumes}. For the much narrower family of sums of
$2t$-th powers of linear forms, the polynomial root is an $L_{2t}$ subspace norm; the classical estimate
$T_2(L_{2t})=\Theta(\sqrt t)$ already gives an $\Omega(\sqrt{d/t})$ obstruction by testing coordinate vectors
and their Rademacher sums~\citep[Lem.~3.2]{DuembgenVanDeGeerVeraarWellner2010}. The scope gain here is that the
lower proof is representation-free and applies to every globally nonnegative form.

The support-size rays themselves are classical in symmetric nonnegativity. For $d\ge3$,
Choi--Lam--Reznick proved that an even symmetric sextic is nonnegative if and only if it is nonnegative on the
rays $(\mathbf 1_k,0^{d-k})$, $k=1,\ldots,d$~\citep[Thm.~3.7]{ChoiLamReznick1987EvenSymmetricSextics}.
Timofte placed this criterion in a broader theory of test sets with few distinct coordinate levels, followed by
half-degree and reflection-group developments
\citep{Timofte2003PositivitySymmetricI,Riener2012HalfDegree,AcevedoVelasco2016ReflectionTestSets}. Recent work
extends the flat-ray criterion to a hook-shaped subspace expressed in elementary symmetric polynomials, while
showing that the corresponding finite criterion for the power-sum hook-shaped class fails in every degree
$r\ge4$---equivalently, from even degree $2r\ge8$ after the square substitution
\citep[Prop.~5.5, Rem.~5.7, and Thms.~5.6, 5.8]{AcevedoBlekhermanDebusRiener2025Vandermonde}. Our use is
quantitative rather than a new test-set claim: we do not assert that these rays contain all extrema or test
nonnegativity in every degree. The normalization also matters. The customary simplex point is
$y=k^{-1}\mathbf 1_k$, whose pullback under $y_i=x_i^2$ is $x=k^{-1/2}\mathbf 1_k$; our $\ell_1$-boundary point
is instead $k^{-1}\mathbf 1_k$. Homogeneity gives
$p(k^{-1}\mathbf 1_k)=k^{-t}p(k^{-1/2}\mathbf 1_k)$. This radial weight is part of the profile whose all-degree
oscillation yields the distortion bound.

The reciprocal-grid step is adjacent to discrete norming and Remez-type estimates for polynomials bounded on
finite sets~\citep{CoppersmithRivlin1992Growth,Yomdin2011DiscreteRemez}. Its nodes, prescribed weight, and
objective are different: we control the oscillation of positive values of $k^{-\alpha}Q(k)$ on a reciprocal
grid and assume no positivity between grid points. We therefore use the estimate as a portable proof lemma,
not as a general replacement for discrete Remez theory.

The quadratic endpoint of the facet program is classical optimal design~\citep{KieferWolfowitz1960}; algorithms
for the associated John ellipsoid include \citet{CohenCousinsLeeYang2019} and subsequent refinements
\citep{LiYuJiangGaoHan2026BeyondAveraging,CaoLiSongYangZhou2022,WoodruffYasuda2025Lazy}. We use entropic mirror
ascent in its standard nonsmooth form~\citep{Bubeck2015ConvexOpt}.

\paragraph{Organization.}
\Cref{sec:prelim} fixes the sandwich framework. \Cref{sec:results} states the formal results in the order of the
questions above. \Cref{sec:lower} isolates the support-size obstruction, and \cref{sec:upper-alg} develops the
matching sign construction and then the general facet certificate, including its strict limitation.
\Cref{sec:discussion} digests what the argument proves, what it rules out, and which component must change in
the remaining problems. Deferred assemblies and the $\ell_p$ proofs appear in \cref{app:proofs}.

\section{Preliminaries}
\label{sec:prelim}

\paragraph{Basic notation.}
We write $[m]=\set{1,\ldots,m}$ and
$\simplex_n=\set{w\in\R^n:w_i\ge0,\ \sum_iw_i=1}$. When the ambient body is fixed and clear, we suppress it
from the subscript of the sandwich ratio $\ratio_K$.

\paragraph{Bodies and gauges.}
A body $K\subset\R^d$ is \emph{symmetric} if it is convex, compact, full-dimensional, and satisfies $K=-K$.
Its gauge is
\[
  \gauge{x}{K}=\inf\set{\lambda>0:x\in\lambda K}.
\]
For a symmetric polytope
$\Poly=\set{x\in\R^d:\abs{a_i^\top x}\le 1,\ i\in[n]}$, where each $a_i$ represents one opposite pair of
facet normals and the rows span $\R^d$, this becomes
$\gauge{x}{\Poly}=\max_i\abs{a_i^\top x}$.

\paragraph{Two polynomial cones.}
Let $\mc N_{d,2t}$ be the cone of globally nonnegative forms of degree $2t$, and let
\[
  \Sigma_{d,2t}
  =\set{p:p=\textstyle\sum_r q_r^2,\ \ q_r\text{ a form of degree }t}
\]
be the sum-of-squares (SoS) cone. In general $\Sigma_{d,2t}\subsetneq\mc N_{d,2t}$. This distinction matters:
Barvinok's construction lies in the SoS cone, whereas our lower bound for the cross-polytope holds against the
larger nonnegative cone.

For $p\in\mc N_{d,2t}$ that is positive away from the origin, put $g=p^{1/(2t)}$. The function $g$ is
positively homogeneous, but it need not be convex and need not be a norm. Its sandwich ratio against $K$ is
\[
  \ratio_K(g)
  =\frac{\max_{\gauge{x}{K}=1}g(x)}
         {\min_{\gauge{x}{K}=1}g(x)} .
\]
For a nonnegative form that vanishes at a nonzero point, we use the extended-value convention
$\ratio_K(p^{1/(2t)})=+\infty$. Thus degenerate forms can be included harmlessly in later infima over closed
weight simplices.
Writing $m=\min_{\partial K}g$ and $M=\max_{\partial K}g$, homogeneity gives
$m\set{g\le1}\subseteq K\subseteq M\set{g\le1}$, so the ratio $M/m$ is precisely the multiplicative loss.
We therefore define
\begin{equation}\label{eq:two-kappas}
  \dist_t(K)
  =\inf_{\substack{p\in\Sigma_{d,2t}\\ p(x)>0\text{ for }x\ne0}}\ratio_K(p^{1/(2t)}),
  \qquad
  \distnn_t(K)
  =\inf_{\substack{p\in\mc N_{d,2t}\\ p(x)>0\text{ for }x\ne0}}\ratio_K(p^{1/(2t)}).
\end{equation}
Thus $\distnn_t(K)\le\dist_t(K)$. Unless a superscript is shown, $\dist_t$ denotes the SoS distortion.
At $t=1$ the two cones coincide and $\dist_1(K)$ is the best ellipsoidal sandwich factor; John's theorem gives
$\dist_1(K)\le\sqrt d$.

\paragraph{Exact degree and cumulative degree budgets.}
The index $t$ in \eqref{eq:two-kappas} specifies the exact homogeneous degree $2t$. Successive feasible classes
are not literally nested, so no monotonicity in $t$ is assumed. If a monotone degree budget is desired, define
\[
  \widehat\kappa_t(K)=\min_{1\le s\le t}\dist_s(K),
  \qquad
  \widehat\kappa_t^{\mathrm{nn}}(K)=\min_{1\le s\le t}\distnn_s(K).
\]
For the cross-polytope, \cref{thm:main} gives the same $\Theta(\sqrt{d/t})$ order for these envelopes when
$1\le t\le d$: use the theorem's lower bound at every $s\le t$ and its upper bound at $s=t$.

\paragraph{Facet-power certificates.}
If $\Poly$ is a symmetric polytope and $w\in\simplex_n$, then
\[
  f_{t,w}(x)=\sum_i w_i(a_i^\top x)^{2t}
            =\sum_i w_i\big((a_i^\top x)^t\big)^2
\]
is an SoS form. On $\partial\Poly$ it satisfies
$f_{t,w}(x)\le\max_i(a_i^\top x)^{2t}=1$. We use these forms for upper bounds and for the oracle design problem
of \cref{sec:upper-alg}; they are a strict subclass of all SoS forms.
The three levels used in the paper are therefore
\[
  \mc N_{d,2t}
  \ \supseteq\ 
  \Sigma_{d,2t}
  \ \supseteq\ 
  \set{f_{t,w}:w\in\simplex_n},
\]
with the inclusion directions reversed for minimization: enlarging the candidate class can only lower the best
distortion. The main lower bound controls the first and largest class; the facet program optimizes a certificate
inside the last and smallest one.

\paragraph{The Veronese lift.}
Let $\ver_t:\R^d\to\R^{\Dt}$, $\Dt=\binom{d+t-1}{t}$, be the symmetric degree-$t$ Veronese embedding, normalized
so that $\inner{\ver_t(a)}{\ver_t(x)}=(a^\top x)^t$ (the polynomial-kernel feature map). Then
$(a_i^\top x)^{2t}=\inner{\ver_t(a_i)}{\ver_t(x)}^2$ and
\[
  f_{t,w}(x)\ =\ \ver_t(x)^\top \Mt(w)\,\ver_t(x),\qquad \Mt(w)=\sum_i w_i\,\ver_t(a_i)\ver_t(a_i)^\top ,
\]
so the sublevel set of a facet-power form is the inverse image, under $\ver_t$, of a possibly degenerate
ellipsoid in the lifted space $\R^{\Dt}$. For positive weights---or, more generally, spanning support---the
form is positive away from the origin. This is the sense in which each degree level lifts the John ellipsoid: at
$t=1$, $\ver_1$ is the identity and $\Mt(w)=\sum_i w_i a_i a_i^\top$ is the usual information matrix. We use the
lift for the algorithm of \cref{sec:upper-alg}; the lower bound of \cref{sec:lower} works directly in $\R^d$.

\paragraph{Barvinok's baseline.}
Barvinok's construction, in the all-degree formulation recorded by Barvinok and Veomett, gives for every
symmetric body and every $t\ge1$ an SoS form satisfying~\citep{Barvinok2003ApproxNorm,BarvinokVeomett2008}
\begin{equation}\label{eq:barvinok-prelim}
  \dist_t(K)\le\binom{d+t-1}{t}^{1/(2t)}.
\end{equation}
For $1\le t\le d$, the right-hand side is at most $\sqrt{2ed/t}$.
Barvinok and Veomett also recorded the following explicit construction for the
cube~\citep[Sec.~2.3]{BarvinokVeomett2008}:
\begin{fact}[A logarithmic-degree cube certificate]
\label{fact:cube-upper}
  The SoS form $p(x)=\sum_{i=1}^d x_i^{2t}$ satisfies
  \[
    \dist_t(\Binf)\le d^{1/(2t)}.
  \]
  In particular, $t=\lceil\log_2d\rceil$ gives $\dist_t(\Binf)\le\sqrt2$.
\end{fact}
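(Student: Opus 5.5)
The plan is to verify directly that $p(x)=\sum_{i=1}^d x_i^{2t}$ is admissible in the infimum defining $\dist_t(\Binf)$ in \eqref{eq:two-kappas}, and then to compute its sandwich ratio on $\partial\Binf$ exactly.

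\emph{Admissibility.} Each term satisfies $x_i^{2t}=(x_i^t)^2$, and $x_i^t$ is a form of degree $t$, so $p\in\Sigma_{d,2t}$. If $x\ne0$, some coordinate $x_j$ is nonzero, and therefore $p(x)\ge x_j^{2t}>0$. Thus $p$ is strictly positive away from the origin.

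\emph{Bounding the ratio.} The gauge of $\Binf$ is $\gauge{x}{\Binf}=\max_i\abs{x_i}$. On the boundary this maximum equals $1$, so every coordinate satisfies $\abs{x_i}\le1$ and at least one satisfies $\abs{x_j}=1$. This gives two bounds.
\begin{itemize}
\item \emph{Lower bound.} Keeping only the term $x_j^{2t}=1$ gives $p(x)\ge1$. Equality holds at a standard basis vector, which lies on $\partial\Binf$.
\item \emph{Upper bound.} Each term is at most $1$, so $p(x)\le d$. Equality holds at the all-ones vector $\mathbf 1$, which also lies on $\partial\Binf$.
\end{itemize}
Hence $\min_{\partial\Binf}p^{1/(2t)}=1$ and $\max_{\partial\Binf}p^{1/(2t)}=d^{1/(2t)}$, so $\ratio_{\Binf}(p^{1/(2t)})=d^{1/(2t)}$. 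Taking the infimum over admissible forms yields $\dist_t(\Binf)\le d^{1/(2t)}$.

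\emph{The particular choice of $t$.} Set $t=\lceil\log_2 d\rceil$. For $d\ge2$ this gives $t\ge1$ and $2^t\ge d$, so
\[
  d^{1/(2t)}\le \bigl(2^{t}\bigr)^{1/(2t)}=\sqrt2 .
\]
For $d=1$, the quantity $\lceil\log_2 1\rceil$ equals $0$, which is not a valid degree index. We therefore read this case as $t=1$ (the smallest admissible degree); then the bound $d^{1/(2t)}=1\le\sqrt2$ holds trivially.

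There is no substantive obstacle. The only point requiring care is the degenerate case $d=1$, where $\lceil\log_2 d\rceil=0$ and the convention $t\ge1$ must be used.
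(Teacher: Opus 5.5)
Your proposal is correct and is essentially the direct verification the paper relies on when it records this fact from Barvinok--Veomett. The coordinate form is SoS and positive away from the origin, its values on $\partial\Binf$ fill exactly $[1,d]$ (attained at $e_1$ and at $\mathbf 1$), and $2^t\ge d$ for $t=\lceil\log_2 d\rceil$ gives $d^{1/(2t)}\le\sqrt2$; your separate handling of $d=1$ is consistent with the paper, which applies the logarithmic choice only for $d\ge2$ in \cref{cor:separation}.
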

\noindent The displayed quantity is an upper bound, not the exact full-cone distortion; \cref{prop:design-gap}
gives a strict SoS improvement already for $d=t=2$. By contrast, Barvinok and Veomett proved that the
cross-polytope still has order-$\sqrt d$ distortion at degrees $4$ and $6$. Our main theorem extends that lower
bound to every $1\le t\le d$.

\paragraph{Rademacher moments.}
Let $\Rad_d=\sum_{j=1}^d\varepsilon_j$ be a sum of independent Rademacher signs $\varepsilon_j\in\set{\pm 1}$, and
$(2t-1)!!=\frac{(2t)!}{2^t t!}=\prod_{i=1}^t(2i-1)$ the double factorial (the $2t$-th Gaussian moment). We write
$B_d$ for the hyperoctahedral group of signed permutations of $\R^d$, the symmetry group of both $\Binf$ and
$\Bone$. The $k$-flat points are $\kflat{k}=\tfrac1k(\underbrace{1,\dots,1}_{k},0,\dots,0)$; each lies on the
$\ell_1$-sphere, $\gauge{\kflat{k}}{1}=1$.

\section{Main results}
\label{sec:results}

The formal results follow the same diagnostic chain as the introduction. What forces the universal rate? Which
explicit form matches it? Does the obstruction survive polarity? How much of the upper construction extends to
general facet systems, and where does that restricted model fail? Finally, which part of the mechanism remains
visible for $\ell_p$ balls? Proofs appear in \cref{sec:lower,sec:upper-alg,app:proofs}.

\subsection{The extremal cross-polytope}

\begin{restatable}[Cross-polytope degree law]{theorem}{ThmMain}
\label{thm:main}
For all $d\ge 1$ and $1\le t\le d$,
\begin{equation}\label{eq:main-bound}
  \frac{1}{\sqrt{2e}}\sqrt{\frac dt}
  \ \le\ \distnn_t(\Bone)
  \ \le\ \dist_t(\Bone)
  \ \le\
  \min\left\{
    \binom{d+t-1}{t}^{1/(2t)},
    \frac{d}{\big(\E[\Rad_d^{2t}]\big)^{1/(2t)}}
  \right\}
  \ \le\ \sqrt{2e}\sqrt{\frac dt}.
\end{equation}
In particular, both the SoS and the nonnegative-form distortions are $\Theta(\sqrt{d/t})$. Moreover, the sign
design has the sharper asymptotic value
\[
  \frac{d}{\big(\E[\Rad_d^{2t}]\big)^{1/(2t)}}
  =\sqrt{\frac e2}\sqrt{\frac dt}\,(1+o(1))
  \qquad\text{when }t\to\infty\text{ and }t=o(\sqrt d).
\]
Consequently:
\begin{enumerate}
  \item\label{it:extremal} the cross-polytope matches Barvinok's universal upper bound up to an absolute
  constant, and is therefore extremal among symmetric bodies up to constants;
  \item\label{it:collapse} for every fixed $C>1$, degree $2t=\Theta_C(d)$ is necessary and sufficient for either
  distortion to be at most $C$.
\end{enumerate}
\end{restatable}

\begin{proof}[Proof idea]
Average an arbitrary nonnegative form over signed permutations and restrict it to the flat points; invariant
theory forces the support-size profile $V(k)=k^{-2t}Q(k)$ with $\deg Q\le t$ and $Q(0)=0$, and interpolation
forces its variation. For the other direction, uniformly averaging the sign-vector facets gives the
Rademacher design of \cref{lem:upper}, while Barvinok's theorem supplies the remaining uniform upper bound.
\end{proof}

\paragraph{Why the rate has this form.}
At $t=1$, an invariant quadratic is a multiple of $\sum_i x_i^2$, whose value at $\kflat{k}$ is proportional
to $1/k$; the vertex and diagonal already differ by a factor $d$, producing the classical $\sqrt d$ distortion.
Higher degree does not remove this tension. It gives the numerator degree $t$ with which to flatten a profile
whose geometric normalization is $k^{2t}$. Interpolation proves that this extra budget buys the factor
$\sqrt t$, but cannot buy more.

Barvinok proved the cross-polytope lower bound for $t=3$, and Barvinok and Veomett recorded the cases
$t=2,3$~\citep{Barvinok2003ApproxNorm}\citep[Sec.~2.3]{BarvinokVeomett2008}. The lower half of
\eqref{eq:main-bound}, proved in \cref{sec:lower}, extends these fixed-degree precedents to every
$1\le t\le d$ while retaining a comparison against the full cone of nonnegative forms.

\begin{remark}[The nontrivial regime]\label{rem:regime}
The displayed lower bound exceeds the trivial bound $1$ when $t<d/(2e)$. For larger $t\le d$, both
$\sqrt{d/t}$ and the distortion are of constant order. The necessity statement in
part~\ref{it:collapse} of \cref{thm:main} uses the lower bound at $t=\Theta_C(d)$; sufficiency follows from Barvinok's
bound for all $t$, including $t>d$ when $C$ is close to $1$.
\end{remark}

The upper construction is forced by the facet geometry. The oriented facet normals of $\Bone$ are sign
vectors, so the most symmetric facet-power form is their uniform average.

\begin{restatable}[The sign design]{lemma}{LemUpper}
\label{lem:upper}
For $\Bone$ and $1\le t\le d$, the uniform design over the sign vectors $\set{\pm1}^d$ gives the form
$f(x)=\E_{\varepsilon\sim\set{\pm1}^d}[(\varepsilon^\top x)^{2t}]$ with
$\ratio(f^{1/(2t)})=d/(\E[\Rad_d^{2t}])^{1/(2t)}$, and
\[
  \frac{\sqrt d}{\big((2t-1)!!\big)^{1/(2t)}}
  \ \le\ \ratio(f^{1/(2t)})
  \ \le\ e\sqrt{\frac dt}.
\]
If $\binom t2<d$, the upper bound can be sharpened to
\[
  \ratio(f^{1/(2t)})
  \le
  \frac{\sqrt d}
  {\big((2t-1)!!\big)^{1/(2t)}
   \big(1-\binom{t}{2}/d\big)^{1/(2t)}} .
\]
Thus $\dist_t(\Bone)\le\ratio(f^{1/(2t)})$, and the ratio is
$\sqrt{e/2}\sqrt{d/t}\,(1+o(1))$ when $t\to\infty$ and $t=o(\sqrt d)$.
\end{restatable}

\begin{proof}[Proof idea]
The form $f$ is invariant under signed permutations. Equalizing two coordinates while preserving their
$\ell_1$ mass can only lower $f$; equivalently, Schur convexity pins the maximum, $1$, at a vertex $e_1$ and the
minimum, $d^{-2t}\E[\Rad_d^{2t}]$, at the diagonal $\kflat{d}$. Geometry is thereby reduced to one Rademacher
moment. Gaussian comparison bounds this moment from above, while the collision-free perfect-matching terms in
its expansion already give the correct lower scale. See \cref{sec:upper-alg}.
\end{proof}

\subsection{The lower bound, in one variable}

The main new ingredient is the lower bound. It uses two distinct compressions. Symmetrization replaces an
arbitrary form by an invariant one without improving the candidate's ratio. Flat-point evaluation then replaces
the remaining invariant algebra by a support-size curve. Only the second step makes the problem one-dimensional.

\begin{restatable}[Symmetrization and flat-point reduction]{lemma}{LemReduction}
\label{lem:reduction}
Let $p$ be any nonnegative form of degree $2t$, positive on the $\ell_1$-sphere. Then:
\begin{enumerate}
  \item averaging $p$ over $B_d$ preserves nonnegativity and does not increase its sandwich ratio; if $p$ is
  SoS, its average is SoS as well. Consequently both infima in \eqref{eq:two-kappas} may be restricted to
  $B_d$-invariant forms;
  \item a $B_d$-invariant form is a linear combination of products of even power sums, and its values on the
  flat points are
  \[
    V(k)\ \defeq\ p(\kflat{k})\ =\ \sum_{\ell=1}^{t} b_\ell\, k^{\ell-2t}\ =\ k^{-2t}\,Q(k),\qquad
    Q(k)=\sum_{\ell=1}^t b_\ell k^\ell ,
  \]
  with $Q$ a real polynomial of degree $\le t$ and $Q(0)=0$;
  \item consequently $\ratio(p^{1/(2t)})^{2t}\ \ge\ \max_{k\in[d]}V(k)\big/\min_{k\in[d]}V(k)$.
\end{enumerate}
\end{restatable}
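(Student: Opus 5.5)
For part 1, I will define $\bar p(x)=\frac1{|B_d|}\sum_{g\in B_d}p(gx)$. Each $p\circ g$ is a nonnegative form of degree $2t$, so $\bar p\in\mc N_{d,2t}$. If $p=\sum_r q_r^2$, then $p\circ g=\sum_r(q_r\circ g)^2$, so $\bar p$ is SoS as well. Each $g$ is a linear isometry of $\gauge{\cdot}{1}$, so $g$ permutes the $\ell_1$-sphere. Set $m=\min_{\partial\Bone}p$ and $M=\max_{\partial\Bone}p$, both positive by hypothesis. Then every $p\circ g$ takes values in $[m,M]$ on the sphere, and hence so does their average. It follows that $\bar p$ is positive on the sphere, and by homogeneity positive away from the origin, with $\max\bar p/\min\bar p\le M/m$. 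Taking $2t$-th roots gives $\ratio(\bar p^{1/(2t)})\le\ratio(p^{1/(2t)})$. Since $\bar p$ stays in the same cone, both infima in \eqref{eq:two-kappas} may be taken over invariant forms.

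For part 2, I will use the standard invariant theory of the hyperoctahedral group. Invariance under sign changes of each coordinate forces every monomial to have only even exponents. Write the form as $\sum c_\beta x^{2\beta}$ with $|\beta|=t$. Permutation invariance then makes it a symmetric polynomial in $y_i=x_i^2$ of degree $t$. By the fundamental theorem on symmetric polynomials over $\R$, the power sums $\pi_j(y)=\sum_i y_i^j$ generate this ring. Hence the form is a linear combination of products $\prod_{s}\pi_{\lambda_s}(y)=\prod_s P_{2\lambda_s}(x)$, indexed by partitions $\lambda$ of $t$, where $P_{2j}=\sum_i x_i^{2j}$.

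I will then evaluate on the flat points: $P_{2j}(\kflat{k})=k\cdot k^{-2j}=k^{1-2j}$. A product over a partition $\lambda\vdash t$ with $\ell(\lambda)=\ell$ parts evaluates to $k^{\ell-2\sum_s\lambda_s}=k^{\ell-2t}$. This step is where the partition structure collapses to its number of parts. Grouping the partitions by $\ell\in\{1,\dots,t\}$ and setting $b_\ell=\sum_{\ell(\lambda)=\ell}a_\lambda$ gives $V(k)=\sum_{\ell=1}^t b_\ell k^{\ell-2t}=k^{-2t}Q(k)$. Since $\ell\ge1$ always, $Q(0)=0$.

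For part 3, I will apply part 2 to the average $\bar p$ of an arbitrary $p$. Each $\kflat{k}$ lies on the $\ell_1$-sphere, so $\max_{\partial}\bar p\ge\max_k V(k)$ and $\min_\partial\bar p\le\min_k V(k)$. These values are positive, since $\bar p>0$ on the sphere. Combining this with part 1 gives $\ratio(p^{1/(2t)})^{2t}\ge\ratio(\bar p^{1/(2t)})^{2t}\ge\max_kV/\min_kV$. Read literally, $V$ is defined from $p$ itself; the inequality holds for $\bar p$'s profile, which is the one needed when $p$ is not invariant. I expect no real obstacle here. The only point needing care is invoking the power-sum generation theorem over $\R$ (fine in characteristic zero) with the correct degree bookkeeping, namely that each monomial in the $\pi_j$ has weighted degree exactly $t$.
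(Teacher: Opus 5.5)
Your proposal is correct and follows the paper's proof essentially step for step: the orbit-average bracket $\min_S p\le\bar p\le\max_S p$ for part 1, generation by even power sums with each partition product collapsing to $k^{\ell-2t}$ for part 2, and the flat points lying on the $\ell_1$-sphere for part 3. Your detour through $\bar p$ in part 3 is harmless. The inequality actually holds directly for $p$'s own flat-point values, since those points lie on the sphere; the paper likewise uses the $k^{-2t}Q(k)$ shape only for the averaged form when it proves the main lower bound.
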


After this reduction, all geometry used by the proof is contained in the positive numbers
$V(1),\ldots,V(d)$. The form away from the flat points, and any certificate representing its nonnegativity,
have disappeared.

The remaining estimate is cleaner when the prescribed power of $k$ is left as a parameter. This separates the
geometric input, which determines $\alpha$, from the interpolation step, which sees only the gap between
$\alpha$ and the available polynomial degree.

\begin{restatable}[Weighted reciprocal-grid profile lemma]{lemma}{LemWeightedProfile}
\label{lem:weighted-profile}
Let $d,t$ be positive integers with $1\le t\le d$, let $\alpha\in\R$ satisfy $\alpha\ge t$, and let $Q$ be a
real polynomial with $\deg Q\le t$ and $Q(0)=0$. If $V(k)=k^{-\alpha}Q(k)>0$ for every $k\in[d]$, put
$\beta=\alpha-t$ and let $\ell_i$ be the Lagrange basis polynomial for the reciprocal nodes
$1,1/2,\ldots,1/t$. Then
\begin{equation}\label{eq:weighted-profile}
  \frac{\max_{k\in[d]}V(k)}{\min_{k\in[d]}V(k)}
  \ \ge\
  \max\left\{1,\frac{d^\beta}{\sum_{i=1}^t i^\beta\abs{\ell_i(1/d)}}\right\}
  \ \ge\
  \max\left\{1,2(2e)^{-t}\left(\frac dt\right)^\beta\right\}.
\end{equation}
For $t=1$, the exact ratio is $d^{\alpha-1}$.
\end{restatable}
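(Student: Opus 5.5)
The plan is to pass to the reciprocal variable $u=1/k$, which turns the profile into a weighted polynomial of degree at most $t-1$. The value at the far node $u=1/d$ is then a fixed linear functional of the values at the first $t$ support sizes, and we bound that functional by Lagrange interpolation.

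\emph{Step 1 (reparametrization).} Write $Q(k)=\sum_{\ell=1}^t b_\ell k^\ell$; this is possible because $Q(0)=0$. Then $Q(1/u)=u^{-t}R(u)$ with
\[
  R(u)=\sum_{\ell=1}^t b_\ell u^{t-\ell},\qquad \deg R\le t-1 .
\]
Hence $V(1/u)=u^{\alpha}Q(1/u)=u^{\beta}R(u)$, where $\beta=\alpha-t\ge0$, and equivalently $R(1/k)=k^{\beta}V(k)$.

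\emph{Step 2 (the interpolation identity).} Since $\deg R\le t-1$ and the $t$ nodes $1,1/2,\ldots,1/t$ are distinct, the polynomial $R$ is recovered exactly from its values at these nodes. Evaluating at $1/d$ gives
\[
  R(1/d)=\sum_{i=1}^t \ell_i(1/d)\,R(1/i).
\]
This holds also when $t=d$; then $\ell_i(1/d)=\delta_{id}$ and the bound below reduces to the trivial one. Substituting $R(1/k)=k^\beta V(k)$ gives
\[
  d^{\beta}V(d)=\sum_{i=1}^t \ell_i(1/d)\,i^{\beta}V(i).
\]
Put $M=\max_{k\in[d]}V(k)$ and $m=\min_{k\in[d]}V(k)$; both are positive by hypothesis. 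Then
\[
  d^\beta m\le d^\beta V(d)\le M\sum_{i=1}^t i^\beta\abs{\ell_i(1/d)} .
\]
This yields the first inequality of \eqref{eq:weighted-profile}; the term $1$ in the maximum is trivial since $M\ge m$.

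\emph{Step 3 (explicit bound on the Lagrange weights).} For $j\ge1$ we have $0<1/d\le1/j$, so $\abs{1/d-1/j}\le 1/j$. Also $\abs{1/i-1/j}=\abs{j-i}/(ij)$. Therefore
\[
  \abs{\ell_i(1/d)}\le\prod_{j\ne i}\frac{1/j}{\abs{j-i}/(ij)}=\frac{i^{t-1}}{(i-1)!\,(t-i)!}.
\]
Using $i^\beta\le t^\beta$ and then $i^{t-1}\le t^{t-1}$, we get
\[
  \sum_{i=1}^t i^\beta\abs{\ell_i(1/d)}\le\frac{t^{\beta}\,t^{t-1}}{(t-1)!}\sum_{i=1}^{t}\binom{t-1}{i-1}=t^\beta\,\frac{t^t}{t!}\,2^{t-1}.
\]
By $t!\ge(t/e)^t$ this is at most $t^\beta\,e^t2^{t-1}=\tfrac12(2e)^t t^\beta$, which gives the second inequality of \eqref{eq:weighted-profile}.

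\emph{Step 4 (the case $t=1$).} Here $Q(k)=b_1k$ with $b_1>0$, so $V(k)=b_1k^{1-\alpha}$. Since $\alpha\ge1$, this is monotone nonincreasing in $k$, and the exact ratio is $V(1)/V(d)=d^{\alpha-1}$.

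The only delicate point is Step 3. The crude estimate $\abs{1/d-1/j}\le1/j$ must not lose more than the constant $2e$ per degree. Keeping the factor $i^{t-1}$ and bounding it by $t^{t-1}$ before applying Stirling is what keeps the loss at exactly $(2e)^t/2$.
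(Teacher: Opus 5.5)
Your proposal is correct and follows essentially the same route as the paper. The shared steps are the reciprocal substitution $V(1/u)=u^\beta R(u)$ with $\deg R\le t-1$, exact Lagrange extrapolation from the nodes $1,1/2,\ldots,1/t$ to $1/d$, the bound $|\ell_i(1/d)|\le i^{t-1}/\bigl((i-1)!\,(t-i)!\bigr)$ summed via the binomial identity and $t!\ge(t/e)^t$, and the direct computation $V(k)=b_1k^{1-\alpha}$ at $t=1$. The only difference is cosmetic: you simplify each factor to $i/|j-i|$ directly, where the paper evaluates the numerator and denominator products separately; the constants come out the same.
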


The exponent gap in \eqref{eq:weighted-profile} is structural. At $\alpha=t$, the choice $Q(k)=k^t$ gives a
constant profile, so no dimension growth is possible without additional information. When $\alpha>t$, the
known weight creates the possibility of dimension growth, but the conclusion becomes nontrivial only when the
displayed weighted interpolation term exceeds one. On a short grid an exactly level profile may still exist;
this is why the maximum with $1$ in \eqref{eq:weighted-profile} is essential.

\begin{restatable}[A discrete univariate bound]{lemma}{LemUnivariate}
\label{lem:univariate}
There is an absolute constant $c_1=1/\sqrt{2e}$ such that for every $d\ge 1$, every $1\le t\le d$, and every real
polynomial $Q$ of degree $\le t$ with $Q(0)=0$ and $Q(k)>0$ for all $k\in[d]$, the sequence $V(k)=k^{-2t}Q(k)$
satisfies
\[
  \frac{\max_{k\in[d]}V(k)}{\min_{k\in[d]}V(k)}
  \ \ge\ 2\left(\frac{d}{2et}\right)^t
  \ \ge\ \big(c_1\sqrt{d/t}\big)^{2t}.
\]
For $t=1$, the proof recovers the exact ratio $d$ before the final factorial relaxation. The displayed lower
bound is at least $2$ whenever $t\le d/(2e)$.
\end{restatable}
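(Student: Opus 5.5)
This is the specialization $\alpha=2t$ of \cref{lem:weighted-profile}. In that case $\beta=\alpha-t=t$, and the hypotheses $\deg Q\le t$, $Q(0)=0$, $Q(k)>0$ on $[d]$ are exactly those of the weighted lemma. Its second inequality gives directly
\[
  \frac{\max_k V(k)}{\min_k V(k)}\ \ge\ 2(2e)^{-t}\Big(\frac dt\Big)^t=2\Big(\frac d{2et}\Big)^t .
\]
So the plan is mainly to make the interpolation step inside \cref{lem:weighted-profile} explicit for this $\alpha$. I would organize the proof as follows.

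\textbf{Step 1: change of variables.} Substitute $u=1/k$. Write $Q(k)=\sum_{\ell=1}^t b_\ell k^\ell$. Then
\[
  V(1/u)=\sum_\ell b_\ell u^{2t-\ell}=u^tR(u),
  \qquad R(u)=\sum_\ell b_\ell u^{t-\ell},
\]
so $\deg R\le t-1$. Lagrange interpolation at the nodes $1,1/2,\dots,1/t$ is exact for $R$, which gives
\[
  R(1/d)=\sum_{i=1}^t \ell_i(1/d)\,R(1/i)=\sum_{i=1}^t \ell_i(1/d)\,i^tV(i).
\]
Hence
\[
  d^{-t}\,V(d)\cdot d^{t}\cdots
\]
more cleanly,
\[
  V(d)=d^{-t}R(1/d)\le d^{-t}\max_k V(k)\sum_i i^t\abs{\ell_i(1/d)}.
\]
Since $V(d)\ge\min_k V(k)$, this gives the ratio lower bound $d^t\big/\sum_i i^t\abs{\ell_i(1/d)}$. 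For $t=1$ we have $\ell_1\equiv1$, so the bound is exactly $d$.

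\textbf{Step 2: bound the Lagrange weights.} This is the step I expect to be the main obstacle. We have
\[
  \ell_i(u)=\prod_{j\ne i}\frac{u-1/j}{1/i-1/j}.
\]
Since $1/d\le 1/j$, the numerator satisfies $\abs{1/d-1/j}\le 1/j$. For the denominator, $\abs{1/i-1/j}=\abs{j-i}/(ij)$. Together these give
\[
  \abs{\ell_i(1/d)}\le\prod_{j\ne i}\frac{i}{\abs{j-i}}=\frac{i^{t-1}}{(i-1)!\,(t-i)!}.
\]
Therefore
\[
  \sum_i i^t\abs{\ell_i(1/d)}\le\sum_i\frac{i^{2t-1}}{(i-1)!(t-i)!}\le t^{2t-1}\frac{2^{t-1}}{(t-1)!}=\frac{(2t^2)^t}{2\,t!}.
\]
Using $t!\ge(t/e)^t$, this is at most $(2et)^t/2$, and the ratio bound becomes $2\,(d/(2et))^t$ as claimed. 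If these constants do not close exactly, I would simply cite the second inequality of \cref{lem:weighted-profile} with $\beta=t$.

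\textbf{Step 3: final relaxations.} First, $2(d/(2et))^t\ge (d/(2et))^t=\big(c_1\sqrt{d/t}\big)^{2t}$ with $c_1=1/\sqrt{2e}$. Second, when $t\le d/(2e)$ the base satisfies $d/(2et)\ge1$, so the bound is at least $2$. For $t=1$, the exact ratio $d$ from Step 1 gives the stated recovery before the factorial relaxation.
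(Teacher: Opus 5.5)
Your proposal is correct and is essentially the paper's proof: the paper also obtains this lemma by specializing \cref{lem:weighted-profile} to $\alpha=2t$ (so $\beta=t$), and your Steps 1--2 reproduce that lemma's argument line for line. This includes the same weight bound $|\ell_i(1/d)|\le i^{t-1}/\big((i-1)!\,(t-i)!\big)$, the same relaxation $i^t\le t^t$, and the same factorial step yielding $2(2e)^{-t}(d/t)^t$. Two small fixes: delete the garbled partial display before ``more cleanly'' in Step 1, and for $t=1$ note that $V(k)=b_1/k$ makes the ratio exactly $d$, not merely at least $d$.
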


Together, \cref{lem:reduction,lem:univariate} give the stronger bound
$\distnn_t(\Bone)\ge c_1\sqrt{d/t}$. The weighted version also removes a repeated interpolation calculation
from the $\ell_p$ proof. We sketch its argument here and give the full proof in \cref{sec:lower}.

\begin{proof}[Proof idea for \cref{lem:weighted-profile}]
Put $\beta=\alpha-t\ge0$ and write $V(1/u)=u^\beta R(u)$, where $R$ has degree at most $t-1$. Its values at
$u=1,1/2,\ldots,1/t$ determine $R(1/d)$ exactly. The data satisfy
$R(1/i)=i^\beta V(i)$, while $V(d)=d^{-\beta}R(1/d)$; if all $V(k)$ were nearly equal, the factor
$(t/d)^\beta$ would force the last value to be too small. The reciprocal substitution thus exposes both the
effective degree and the exponent gap. Bounding the Lagrange weights and using $t!\ge(t/e)^t$ gives
\eqref{eq:weighted-profile}; setting $\alpha=2t$ gives \cref{lem:univariate}.
\end{proof}

\subsection{A same-degree contrast with the cube}

Polarity is a stress test for the obstruction. The cube construction of \cref{fact:cube-upper} and the
cross-polytope lower bound can be compared at the same degree without knowing the cube's exact optimum.

\begin{restatable}[Polar bodies at the same degree]{corollary}{CorSeparation}
\label{cor:separation}
Let $t_C^*(K)=\min\set{t\ge1:\dist_t(K)\le C}$. For every fixed $C>1$ and every $d\ge2$,
\[
  t_C^*(\Binf)\ \le\ \left\lceil\frac{\log d}{2\log C}\right\rceil,
  \qquad
  t_C^*(\Bone)=\Theta_C(d).
\]
In particular, for $d\ge2$ and $t=\lceil\log_2d\rceil$,
\[
  \dist_t(\Binf)\le\sqrt2,
  \qquad
  \dist_t(\Bone)\ge
  \frac1{\sqrt{2e}}\sqrt{\frac{d}{\lceil\log_2d\rceil}}.
\]
Thus an explicit logarithmic-degree SoS form already gives constant distortion for the cube, while the polar
cross-polytope still has diverging distortion and requires degree of order $d$ for a constant factor.
\end{restatable}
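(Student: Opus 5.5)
The corollary follows by combining \cref{fact:cube-upper} with \cref{thm:main}; no new estimate is needed.

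\emph{The cube bound.} For any $t\ge1$, \cref{fact:cube-upper} gives $\dist_t(\Binf)\le d^{1/(2t)}$. Now $d^{1/(2t)}\le C$ holds exactly when $t\ge \log d/(2\log C)$. For $d\ge2$ and $C>1$ this threshold is positive, so its ceiling is a positive integer. Taking $t=\lceil \log d/(2\log C)\rceil$ therefore gives a feasible index in the definition of $t_C^*$, and hence $t_C^*(\Binf)\le\lceil\log d/(2\log C)\rceil$. The special case $t=\lceil\log_2 d\rceil$ is the stated consequence of the fact: $d^{1/(2t)}\le d^{1/(2\log_2 d)}=\sqrt2$.

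\emph{The cross-polytope, upper half.} For $t_C^*(\Bone)=O_C(d)$, the upper bound in \cref{thm:main} gives $\dist_t(\Bone)\le\sqrt{2e}\sqrt{d/t}$ for $t\le d$. This is at most $C$ once $t\ge 2ed/C^2$.

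If $C^2>2e$, this threshold may be below $1$, and then $t=1$ works. Otherwise the threshold exceeds $d$, i.e.\ $C^2< 2e$, and the theorem's range $t\le d$ does not cover it. In that case I fall back on Barvinok's bound \eqref{eq:barvinok-prelim}, which is valid for all $t\ge1$, as \cref{rem:regime} indicates. I will take $t=Md$ for a constant $M=M(C)$ and use $\binom{d+t-1}{t}\le\binom{(M+1)d}{d}\le (e(M+1))^{d}$. This gives
\[
\binom{d+t-1}{t}^{1/(2t)}\le (e(M+1))^{1/(2M)},
\]
which tends to $1$ as $M\to\infty$ and so is at most $C$ for $M$ large depending on $C$. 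In every case $t_C^*(\Bone)\le \max\{1,\lceil M_C d\rceil\}=O_C(d)$.

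\emph{The cross-polytope, lower half.} For $t_C^*(\Bone)=\Omega_C(d)$, \cref{thm:main} gives $\dist_t(\Bone)\ge\distnn_t(\Bone)\ge (2e)^{-1/2}\sqrt{d/t}$ for every $t\le d$. This exceeds $C$ whenever $t<d/(2eC^2)$. Any feasible $t\le d$ therefore satisfies $t\ge d/(2eC^2)$, and any feasible $t>d$ satisfies this trivially. Hence $t_C^*(\Bone)\ge d/(2eC^2)$.

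For the displayed instance, put $t=\lceil\log_2 d\rceil$. For $d\ge2$ we have $t\le d$, so the same lower bound applies directly and gives the stated $\Omega(\sqrt{d/\log d})$ estimate.

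\emph{Expected obstacle.} The only delicate point is the sufficiency half when $C<\sqrt{2e}$. There the simple bound $\sqrt{2e}\sqrt{d/t}$ is useless on the range $t\le d$ where \cref{thm:main} applies. The argument must instead invoke the all-$t$ Barvinok bound with $t$ a large constant multiple of $d$, checking that $\binom{(M+1)d}{d}^{1/(2Md)}\to1$ uniformly in $d$.
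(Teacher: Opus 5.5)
Your proposal is correct and follows the paper's proof. The cube bound comes from \cref{fact:cube-upper}; the lower half of $t_C^*(\Bone)=\Theta_C(d)$ comes from the lower bound in \cref{thm:main}, since any feasible $t$ satisfies $t\ge d/(2eC^2)$; and the upper half comes from Barvinok's all-degree bound at $t$ a large $C$-dependent multiple of $d$, where your estimate $\binom{d+t-1}{t}\le\binom{(M+1)d}{d}\le(e(M+1))^d$ is a slightly tidier version of the paper's binomial bound. Your case split on $C^2$ versus $2e$ is unnecessary (and the ``i.e.\ $C^2<2e$'' misses $C^2=2e$, where $t=d$ works), since the Barvinok argument alone handles every $C>1$.
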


\subsection{An oracle max--min design}

Facet-power forms lead to a useful certificate problem. Its logic has four steps: the boundary floor is concave
in the facet weights; a worst boundary point is a supergradient; homogeneity converts the nonconvex-looking
boundary search into dual-norm optimization; and at $t=1$ the entire picture collapses to classical optimal
design. The result below states exactly what this optimization certifies.

\begin{restatable}[Optimization of a facet-design bound]{theorem}{ThmAlgorithm}
\label{thm:algorithm}
For a symmetric polytope
$\Poly=\set{x:\abs{a_i^\top x}\le 1,\ i\in[n]}$, whose facet normals span $\R^d$, let
$\Phi_t(w)=\min_{\gauge{x}{\Poly}=1}f_{t,w}(x)$
and define the facet-design certificate
\[
  \designbd_t(\Poly)=\left(\max_{w\in\simplex_n}\Phi_t(w)\right)^{-1/(2t)}.
\]
The maximum is attained and is strictly positive, so this certificate is finite.
Then:
\begin{enumerate}
  \item\label{it:concave} $\Phi_t$ is concave and takes values in $[0,1]$. Moreover,
  \[
    \distnn_t(\Poly)\le\dist_t(\Poly)
    \le\inf_{w\in\simplex_n}\ratio_\Poly(f_{t,w}^{1/(2t)})
    \le\designbd_t(\Poly).
  \]
  \item\label{it:mwu} For $n\ge2$, let $T\ge1$ and $\delta\ge0$. Suppose an oracle, given $w$, returns
  $x_w\in\partial\Poly$ with
  $f_{t,w}(x_w)\le\Phi_t(w)+\delta$. Entropic mirror ascent with
  \[
    w_i^+
    =\frac{w_i\exp\!\big(\eta(a_i^\top x_w)^{2t}\big)}
           {\sum_jw_j\exp\!\big(\eta(a_j^\top x_w)^{2t}\big)}
  \]
  initialized at the uniform distribution, with $\eta=\sqrt{2\log n/T}$, has an averaged iterate $\bar w_T$
  satisfying
  \[
    \max_{w\in\simplex_n}\Phi_t(w)-\Phi_t(\bar w_T)
    \le\sqrt{\frac{2\log n}{T}}+\delta.
  \]
  Thus, for every $\eps>0$, an exact oracle gives additive $\eps$-accuracy in
  $T=O(\max\{1,\log n\}/\eps^2)$ calls. The case $n=1$ is immediate.
  \item\label{it:oracle} For $w$ in the relative interior of the simplex, put
  $h_w=f_{t,w}^{1/(2t)}$. Then $h_w$ is a norm and
  \[
    \Phi_t(w)^{-1/(2t)}
    =\max_{i\in[n]}h_w^*(a_i),
    \qquad
    h_w^*(a_i)=\max\set{a_i^\top x:f_{t,w}(x)\le1}.
  \]
  Thus the worst-direction oracle reduces exactly to $n$ convex norm-dual problems.
  \item\label{it:ccly} At $t=1$, for $M(w)=\sum_iw_i a_i a_i^\top\succ0$,
  \[
    \Phi_1(w)=\frac{1}{\max_i a_i^\top M(w)^{-1}a_i}.
  \]
  Hence $\max_w\Phi_1(w)=1/d$ and its maximizers are the $D$-optimal designs by the Kiefer--Wolfowitz equivalence
  theorem. For any maximizer $w^\star$, the ellipsoid
  $\set{x:d\,x^\top M(w^\star)x\le1}$ is the symmetric John ellipsoid.
\end{enumerate}
\end{restatable}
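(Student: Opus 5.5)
The proof handles the four parts of \cref{thm:algorithm} separately, after first settling attainment and positivity. Attainment follows because $\Phi_t$ is a pointwise minimum over the compact set $\partial\Poly$ of functions $w\mapsto f_{t,w}(x)$ that are linear in $w$; so $\Phi_t$ is concave and upper semicontinuous (indeed continuous) on the compact simplex, and its maximum is attained. For positivity, take the uniform weight $w=\mathbf 1/n$. Since the $a_i$ span $\R^d$, $f_{t,w}(x)>0$ for every $x\ne0$, and by compactness $\Phi_t(\mathbf 1/n)>0$. The range $[0,1]$ follows from nonnegativity together with the bound $f_{t,w}\le\max_i(a_i^\top x)^{2t}=1$ on $\partial\Poly$ noted in the preliminaries.

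\textbf{Part~\ref{it:concave}.} The first two inequalities are the cone inclusions $\set{f_{t,w}}\subseteq\Sigma_{d,2t}\subseteq\mc N_{d,2t}$; degenerate $w$ give ratio $+\infty$ by convention. For the last inequality, take a maximizer $w^\star$, which is positive definite by positivity of the maximum. Then $\max_{\partial\Poly}f_{t,w^\star}\le1$ and $\min_{\partial\Poly}f_{t,w^\star}=\Phi_t(w^\star)$, so the ratio is at most $\Phi_t(w^\star)^{-1/(2t)}=\designbd_t(\Poly)$.

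\textbf{Part~\ref{it:mwu}.} This is the standard entropic mirror-descent bound for maximizing a concave function over the simplex, applied to $-\Phi_t$. The vector $g_i=(a_i^\top x_w)^{2t}$ is a $\delta$-approximate supergradient of $\Phi_t$ at $w$: for every $w'$,
\[
  \Phi_t(w')\le f_{t,w'}(x_w)=\inner{g}{w'},
  \qquad
  \inner{g}{w}\le\Phi_t(w)+\delta .
\]
Moreover $\norm{g}_\infty\le1$ because $x_w\in\partial\Poly$. The usual regret bound $\log n/\eta+\eta T/2$, divided by $T$, with $\eta=\sqrt{2\log n/T}$, gives $\sqrt{2\log n/T}$. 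Concavity (Jensen's inequality) transfers this bound to the averaged iterate $\bar w_T$, and the oracle slack $\delta$ is added once. I would cite \citet{Bubeck2015ConvexOpt} rather than rederive this.

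\textbf{Part~\ref{it:oracle}.} This is the main point needing care, namely why $h_w$ is a norm. Write $f_{t,w}(x)=\sum_iw_i\abs{a_i^\top x}^{2t}$. Then $h_w(x)=\norm{(w_i^{1/(2t)}a_i^\top x)_i}_{2t}$ is an $\ell_{2t}$ norm composed with an injective linear map (injective because the $a_i$ span and all $w_i>0$), hence a norm. For the identity, homogeneity gives
\[
  \Phi_t(w)^{1/(2t)}=\min_{\gauge{x}{\Poly}=1}h_w(x)=\Bigl(\max_{h_w(x)=1}\gauge{x}{\Poly}\Bigr)^{-1},
\]
and $\max_{h_w(x)\le1}\max_i\abs{a_i^\top x}=\max_ih_w^*(a_i)$ by exchanging the two maxima and using the symmetry $x\mapsto-x$.

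\textbf{Part~\ref{it:ccly}.} At $t=1$ the dual of $h_w(x)=\sqrt{x^\top Mx}$ is $\sqrt{a^\top M^{-1}a}$, so part~\ref{it:oracle} gives the formula for $\Phi_1(w)$; this extends to boundary $w$ with $M\succ0$ by the same argument. The identity $\sum_iw_ia_i^\top M^{-1}a_i=\tr(I)=d$ forces $\max_ia_i^\top M^{-1}a_i\ge d$, so $\Phi_1\le1/d$. Kiefer--Wolfowitz gives equality exactly at $D$-optimal (equivalently $G$-optimal) designs. Finally, the John-ellipsoid identification uses the standard duality between $D$-optimal design on the $a_i$ and the maximum-volume inscribed ellipsoid of $\Poly$: the ellipsoid $\set{x:d\,x^\top M(w^\star)x\le1}$ lies in $\Poly$ because $\max_ia_i^\top M^{-1}a_i=d$, and the optimality conditions match John's contact conditions. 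I would cite this correspondence rather than reprove it.
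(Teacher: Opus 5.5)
Your proposal is correct and follows essentially the same route as the paper in all four parts: concavity and attainment from a pointwise minimum of linear functions, the worst boundary point used as a $\delta$-supergradient in the entropic regret bound, the homogeneity and exchange-of-maxima dual-norm identity, and the trace identity plus Kiefer--Wolfowitz at $t=1$. (Your generic $\log n/\eta+\eta T/2$ regret bound, in place of the paper's Hoeffding-type $\eta^2/8$ remainder, still gives exactly $\sqrt{2\log n/T}$ at the stated $\eta$.) The paper differs only in proving directly what you cite, namely the $\log\det$ directional-derivative proof of Kiefer--Wolfowitz and the KKT multipliers $d w_i^\star$ that certify the John ellipsoid, and in recording two one-line points you leave implicit: a singular $M(w)$ forces $\Phi_1(w)=0$, so every maximizer is nonsingular, and a maximizing pair $(i,x)$ in the dual problems yields the minimizing boundary point $x/\gauge{x}{\Poly}$ that the mirror-ascent oracle must return.
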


\paragraph{Proof mechanism.}
The ceiling $\max_{\partial\Poly}f_{t,w}\le1$ turns certificate design into the problem of raising a concave
boundary floor. A minimizing boundary point exposes a supergradient, so entropic mirror ascent repeatedly
reweights the facet directions that see the least-protected point. Homogeneity then replaces the boundary
equality constraint by dual norms on the unit ball of $h_w$. At $t=1$, these dual norms are leverage scores and
the construction reduces to classical $D/G$-optimal design.

\paragraph{What the theorem does not optimize.}
The theorem optimizes the certificate $\designbd_t$; it neither minimizes the true ratio within the facet-power
family nor characterizes the full SoS optimum. The norm-dual identity removes the apparent nonconvexity of the
boundary oracle, but its numerical cost and conditioning remain separate from the outer iteration bound.

Before treating the optimized certificate as a proxy for the SoS optimum, one should test the smallest
symmetric nonquadratic case. A mixed quartic term already separates the two notions.

\begin{restatable}[A symmetric design gap]{proposition}{PropDesignGap}
\label{prop:design-gap}
For the square and quartic forms,
\[
  \dist_2(B_\infty^2)
  \le\left(\frac43\right)^{1/4}
  <2^{1/4}
  =\designbd_2(B_\infty^2).
\]
The same strict gap holds for $B_1^2$, which is linearly equivalent to $B_\infty^2$. Thus signed-permutation
symmetry alone does not make the facet-power certificate optimal over the full SoS cone.
\end{restatable}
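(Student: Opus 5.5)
The plan is to verify both sides of the displayed chain by direct computation in two variables, and then to transfer everything to $B_1^2$ by a linear change of coordinates.

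\emph{Step 1 (the SoS upper bound).} Take $p(x,y)=x^4+y^4-x^2y^2$. The identity
\[
(x^2-y^2)^2+(xy)^2=x^4-2x^2y^2+y^4+x^2y^2=p(x,y)
\]
shows $p\in\Sigma_{2,4}$. Since $p$ is a sum of two squares vanishing simultaneously only at the origin, $p$ is positive away from the origin.

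By the symmetry of $p$ under signed permutations, it suffices to compute $p$ on the edge $x=1$, $|y|\le1$. Writing $s=y^2\in[0,1]$, we get $p=1-s+s^2$. Its maximum is $1$, attained at $s\in\{0,1\}$, and its minimum is $3/4$, attained at $s=1/2$. Hence $\ratio(p^{1/4})=(4/3)^{1/4}$, and therefore $\dist_2(B_\infty^2)\le(4/3)^{1/4}$. The strict inequality $(4/3)^{1/4}<2^{1/4}$ is immediate.

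\emph{Step 2 (the facet-design value).} The facet normals of $B_\infty^2$ are $a_1=e_1$ and $a_2=e_2$, so $f_{2,w}=w_1x^4+w_2y^4$. On the edge $x=\pm1$ we have $f_{2,w}=w_1+w_2y^4\ge w_1$, with equality at $y=0$. Symmetrically, the edge $y=\pm1$ gives minimum $w_2$. Thus
\[
\Phi_2(w)=\min\{w_1,w_2\},
\]
whose maximum over $\simplex_2$ is $1/2$, attained at the uniform weights. Hence $\designbd_2(B_\infty^2)=(1/2)^{-1/4}=2^{1/4}$. This is exact, not merely an upper bound, because $\designbd_t$ is defined through the maximizing $\Phi_t$ value.

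\emph{Step 3 (transfer to $B_1^2$).} Let $A(x,y)=(x+y,\,x-y)$. Then $\gauge{Ax}{\infty}=\gauge{x}{1}$, so $A$ maps $B_1^2$ onto $B_\infty^2$, and the facet normals of $B_1^2$ are $(1,1)=A^\top e_1$ and $(1,-1)=A^\top e_2$.

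Composition $q\mapsto q\circ A$ is a bijection of $\Sigma_{2,4}$ onto itself and preserves positivity away from the origin. It carries the boundary of $B_1^2$ onto the boundary of $B_\infty^2$, so it preserves sandwich ratios; hence $\dist_2(B_1^2)=\dist_2(B_\infty^2)$.

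The same map identifies the facet-power forms of the two bodies: $(e_i^\top Ax)^4=((A^\top e_i)^\top x)^4$. The corresponding functions $\Phi_2$ therefore coincide on $\simplex_2$, and so do the certificates $\designbd_2$.

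The only point needing care is Step 2, because $\designbd$ depends on the chosen representatives $a_i$. The paper normalizes each facet by the condition $|a_i^\top x|\le1$, so the representatives are determined up to sign, which does not affect $(a_i^\top x)^4$. No other step poses a real obstacle; the final sentence of the proposition then follows because both bodies are fully signed-permutation symmetric, yet the design certificate is strictly worse than the exhibited SoS form.
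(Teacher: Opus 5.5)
Your proposal is correct and follows essentially the same route as the paper: the same SoS quartic $x^4+y^4-x^2y^2=(x^2-y^2)^2+(xy)^2$ reduced to $1-z+z^2\in[3/4,1]$ on the square's boundary, the same computation $\Phi_2(w)=\min\{w_1,w_2\}$ giving $2^{1/4}$, and the same transfer to $B_1^2$ via $(x,y)\mapsto(x+y,x-y)$ with facet normals pulled back by $A^\top$. The paper additionally notes that the point $(1,1)$ forces the boundary maximum of every facet form to be exactly one, so $2^{1/4}$ is also the best actual ratio within the facet-power family; this is not needed for the stated inequality.
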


\subsection{\texorpdfstring{Extensions to $\ell_p$ balls}{Extensions to ell-p balls}}
\label{sec:ellp}

The $\ell_p$ family stress-tests the flat-point mechanism by changing one exponent while preserving the
symmetry. It reveals a methodological transition at $p=2$. We state only the ranges supported by the proof;
determining the interior regime $p>2$ remains open.

\begin{restatable}[Three consequences for $\ell_p$ balls]{theorem}{ThmEllp}\label{thm:ellp}
For $1\le t\le d$ and $p\in[1,\infty]$, with the convention $1/\infty=0$:
\begin{enumerate}
  \item\label{it:reson} If $p=2j$ for a divisor $j\mid t$, then
  $\distnn_t(\Bp)=\dist_t(\Bp)=1$.
  \item\label{it:t1law} At degree two,
  $\dist_1(\Bp)=d^{\abs{1/2-1/p}}$.
  \item\label{it:plaw} If $1\le p\le2$, then
  \[
    \frac1{\sqrt{2e}}\left(\frac dt\right)^{1/p-1/2}
    \le\distnn_t(\Bp)
    \le\dist_t(\Bp)
    \le d^{1/p-1/2}.
  \]
  Consequently, for fixed $p<2$ and $\log t=o(\log d)$, both distortions equal
  $d^{\,1/p-1/2+o(1)}$.
\end{enumerate}
\end{restatable}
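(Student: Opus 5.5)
We treat the three parts separately; they share only the flat-point machinery of \cref{lem:reduction} and the weighted profile bound of \cref{lem:weighted-profile}.

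\textbf{Part~\ref{it:reson} (resonance).} If $p=2j$ with $j\mid t$, take the form $F(x)=\big(\sum_i x_i^{2j}\big)^{t/j}$. It has degree $2t$ and equals $\gauge{x}{p}^{2t}$, so it is constant on $\partial\Bp$ and its ratio is $1$. It is SoS: $\sum_i x_i^{2j}$ is a sum of squares of the forms $x_i^j$, and a power of an SoS form is SoS because products of sums of squares are sums of squares. Hence $\dist_t(\Bp)=1$. Since every ratio is at least $1$, this also gives $\distnn_t(\Bp)=1$.

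\textbf{Part~\ref{it:t1law} (degree two).} At $t=1$, averaging over $B_d$ (the signed-permutation symmetry of $\Bp$) does not increase the ratio; the proof of \cref{lem:reduction}(1) only uses the body's invariance. The only invariant quadratic is $c\sum_ix_i^2$. The optimal ellipsoid is therefore the Euclidean ball, and the ratio is $\max_{\partial\Bp}\norm{x}_2/\min_{\partial\Bp}\norm{x}_2$. On the $\ell_p$ sphere the Euclidean norm is extremized at a vertex $e_1$ and at the diagonal $d^{-1/p}\mathbf 1$, where it equals $1$ and $d^{1/2-1/p}$ respectively. Which point gives the maximum depends on the sign of $p-2$, so the ratio is $d^{\abs{1/2-1/p}}$.

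\textbf{Part~\ref{it:plaw} ($1\le p\le 2$).}
\begin{itemize}
  \item \emph{Upper bound.} Use $\sum_i x_i^2$ raised to the power $t$. The ratio of the root is again $\max\norm{x}_2/\min\norm{x}_2$ on $\partial\Bp$, which equals $d^{1/p-1/2}$ as in part~\ref{it:t1law}.
  \item \emph{Lower bound: reduction.} Symmetrize as above. Then evaluate the invariant form on the $\ell_p$ flat points $y^{(k)}=k^{-1/p}\mathbf 1_k$. Each product of $\ell$ even power sums at $y^{(k)}$ equals $k^{\ell}k^{-2t/p}$. So $V(k)=k^{-\alpha}Q(k)$ with $\alpha=2t/p$, $\deg Q\le t$, $Q(0)=0$, and $V(k)>0$ by positivity. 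Since $p\le2$ we have $\alpha\ge t$, and \cref{lem:weighted-profile} applies with $\beta=2t/p-t=2t(1/p-1/2)$.
  \item \emph{Lower bound: extraction.} The lemma gives
  \[
    \ratio^{2t}\ge 2(2e)^{-t}(d/t)^{2t(1/p-1/2)}.
  \]
  Taking $2t$-th roots gives at least $(2e)^{-1/2}(d/t)^{1/p-1/2}$, since $2^{1/(2t)}\ge1$. This is the stated constant.
\end{itemize}

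For fixed $p<2$ with $\log t=o(\log d)$, we have $(d/t)^{1/p-1/2}=d^{1/p-1/2+o(1)}$, and the constant $1/\sqrt{2e}$ is absorbed into $d^{o(1)}$. Together with the upper bound $d^{1/p-1/2}$, both distortions equal $d^{1/p-1/2+o(1)}$.

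\textbf{Main obstacle.} The step needing the most care is the extremal claim in part~\ref{it:t1law}: that the $\ell_2$ norm on the $\ell_p$ sphere is extremized at $e_1$ and $d^{-1/p}\mathbf 1$. This follows from the standard norm comparisons $\norm{x}_2\le\norm{x}_p\le d^{1/p-1/2}\norm{x}_2$ for $p\le2$, and the reversed inequalities for $p\ge2$, each with equality at these points. The second concern is whether symmetrization applies for general $p$. Only $B_d$-invariance of $\gauge{\cdot}{p}$ is used there, so the argument of \cref{lem:reduction} transfers verbatim.
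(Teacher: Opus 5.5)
Your proposal is correct and follows the paper's proof essentially step for step: the resonance form $(\sum_i x_i^{2j})^{t/j}$, signed-permutation averaging down to the Euclidean quadratic at $t=1$, and the $\ell_p$ flat points $k^{-1/p}\mathbf 1_k$ giving $\alpha=2t/p$, so that \cref{lem:weighted-profile} applies exactly when $p\le2$, with $(\sum_i x_i^2)^t$ supplying the upper bound. You also note explicitly that the symmetrization step of \cref{lem:reduction} uses only the $B_d$-invariance of the $\ell_p$-sphere, a point the paper leaves implicit.
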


The resonance statement in \cref{thm:ellp} occurs because
$\gauge{x}{2j}^{2t}=(\sum_i x_i^{2j})^{t/j}$ is itself an SoS form of degree $2t$. For
$1\le p\le2$, evaluating an invariant form at
$k^{-1/p}(1,\ldots,1,0,\ldots,0)$ replaces the exponent $2t$ in
\cref{lem:reduction} by $2t/p$, after which \cref{lem:weighted-profile} applies directly. For $p>2$ the exponent
crosses the interpolation degree, and this proof no longer determines the distortion. This is a boundary of
the argument, not evidence for any particular law beyond it.

\section{The obstruction: from a boundary problem to a grid}
\label{sec:lower}

This section proves the stronger inequality
$\distnn_t(\Bone)\ge c_1\sqrt{d/t}$, where $c_1=1/\sqrt{2e}$. The proof performs three reductions, each
discarding information in a direction safe for a lower bound. We first remove orientation by averaging, then
choose an orbit family on which the invariant algebra remembers only support size, and finally invert that
support-size grid so that ordinary polynomial interpolation applies. The decisive quantitative step is to use
the classical flat-ray family as a distortion witness. Classifying invariant nonnegative forms directly would
still leave many partition-indexed terms; on these rays all of those terms collapse onto support size. Once the profile
$k^{-2t}Q(k)$ appears, the remaining obstruction is one-dimensional.

\subsection{Choosing the witness profile}

\LemReduction*

\begin{proof}
Write $S=\set{x:\gauge{x}{1}=1}$ for the $\ell_1$-sphere and $G=B_d$ for the group of signed permutations. Since
signed permutations preserve $\ell_1$, the set $S$ is $G$-invariant.

\emph{(1) Symmetrization.} Let $p$ be nonnegative of degree $2t$, positive on $S$. Define the orbit average
$\bar p(x)=\tfrac1{\abs G}\sum_{U\in G}p(U^{-1}x)$; it is nonnegative, of degree $2t$, and $G$-invariant. For each
$x\in S$ every point $U^{-1}x$ lies in $S$, so $\min_S p\le p(U^{-1}x)\le\max_S p$, and averaging preserves the
bracket: $\min_S p\le\bar p(x)\le\max_S p$ for all $x\in S$. Taking extrema over $x$ gives $\max_S\bar p\le\max_S
    p$ and $\min_S\bar p\ge\min_S p$, hence
    $\ratio(\bar p^{1/(2t)})\le\ratio(p^{1/(2t)})$. If
    $p=\sum_r q_r^2$ is SoS, then
    $\bar p=\abs{G}^{-1}\sum_{U,r}(q_r\circ U^{-1})^2$ is SoS as well.
    It follows that the infimum over either polynomial cone is unchanged when restricted to $G$-invariant forms.

    This bracket argument is the entire reason averaging is safe: orbit averaging may pull a maximum down and
    push a minimum up, but it cannot increase the ratio. This is exactly the direction needed when minimizing
    over candidates. Notice also that no positivity representation has been used.

\emph{(2) Power-sum coordinates.} A $G$-invariant polynomial is invariant under sign flips --- so each variable
appears to even powers, i.e.\ it is a polynomial in $y_i=x_i^2$ --- and under permutations, so it is a symmetric
polynomial in $y_1,\dots,y_d$. By the fundamental theorem of symmetric polynomials and Newton identities, these
are generated by the power sums $\sum_i y_i^k=\sum_i x_i^{2k}=p_{2k}(x)$. A degree-$2t$ form is therefore a combination of monomials
$\prod_k p_{2k}^{m_k}$ with $\sum_k k\,m_k=t$. On a flat point, $p_{2k}(\kflat{j})=j\cdot(1/j)^{2k}=j^{1-2k}$, so
a monomial with $\ell=\sum_k m_k$ factors evaluates to $j^{\,\ell-2t}$ (using $\sum_k k m_k=t$), where
$1\le\ell\le t$ ($\ell=1$ for the single factor $p_{2t}$, $\ell=t$ for $p_2^t$). Aggregating by $\ell$,
\[
  V(k)=p(\kflat{k})=\sum_{\ell=1}^t b_\ell\,k^{\ell-2t}=k^{-2t}Q(k),\qquad Q(k)=\sum_{\ell=1}^t b_\ell k^\ell,
\]
so $\deg Q\le t$ and $Q(0)=0$.

This is the information collapse on which the proof turns. At $\kflat{j}$, each power-sum factor contributes
one copy of $j$, whereas the weighted degree contributes the common factor $j^{-2t}$. The shape of the
partition indexing the invariant monomial disappears; only its number $\ell$ of factors remains. A large
invariant coefficient space has therefore become one polynomial of degree at most $t$.

\emph{(3) Flat points bound the ratio.} The $d$ flat points lie on $S$, so $\max_S p\ge\max_{k\in[d]}V(k)$ and
$\min_S p\le\min_{k\in[d]}V(k)$; dividing, $\ratio(p^{1/(2t)})^{2t}=\max_S p/\min_S p\ge\max_k V(k)/\min_k V(k)$.
\end{proof}

\begin{remark}
Part (3) is a \emph{lower} bound on $\ratio$: it does not claim the worst direction of a general invariant form is
a flat point. This is precisely why the witness strategy is economical. Locating the true extrema could require
understanding the form on the entire boundary; the proof needs only $d$ directions that already force
substantial variation.
\end{remark}

\subsection{Weighted grid rigidity by interpolation}

\begin{proof}[Proof of \cref{lem:weighted-profile}]
Fix $d,t,\alpha$ and a valid $Q$. Put $\beta=\alpha-t\ge0$, and write
$s=\max_{k\in[d]}V(k)$ and $m=\min_{k\in[d]}V(k)$, both positive.

The goal is to use the first $t$ support sizes to control $V(d)$. In its original variable, $V(k)$ is a
weighted polynomial profile. Passing to reciprocals separates the known power $u^\beta$ from the low-degree
part that is free to move. This is also why $\beta=\alpha-t$ is the relevant parameter rather than $\alpha$
alone.

\emph{Change of variable.} Since $Q(0)=0$ and $\deg Q\le t$, write $Q(k)=\sum_{j=1}^t a_j k^j$, so
$V(k)=\sum_{j=1}^t a_j k^{j-\alpha}$. With $u=1/k$,
\[
  V(1/u)=u^\beta R(u),
  \qquad
  R(u)=\sum_{j=1}^t a_j u^{t-j},
  \qquad \deg R\le t-1.
\]
Thus $R(1/i)=i^\beta V(i)$ and $V(d)=d^{-\beta}R(1/d)$.

\emph{Interpolate at the $t$ largest grid points.} Let $u_i=1/i$ for $i\in[t]$ and put $u_0=1/d$. Since
$\deg R\le t-1$, Lagrange interpolation through $u_1,\dots,u_t$ is exact at $u_0$:
\[
  R(u_0)=\sum_{i=1}^t R(u_i)\,\ell_i(u_0),\qquad \ell_i(u_0)=\prod_{\substack{1\le j\le t\\ j\ne i}}
  \frac{u_0-u_j}{u_i-u_j}.
\]
Put $L_i=\abs{\ell_i(u_0)}$. Since $R(u_i)=i^\beta V(i)$ and $V(i)\le s$,
\begin{equation}\label{eq:Vd-bound}
  V(d)=d^{-\beta}\abs{R(u_0)}
  \le d^{-\beta}s\sum_{i=1}^t i^\beta L_i.
\end{equation}
When $t=d$, $u_0=u_t$, so $L_t=1$ and the other weights vanish; the formula simply reads $V(d)\le s$.

\emph{Bound the interpolation weights.} With $u_0=1/d$ and $u_j=1/j$ ($1\le j\le t\le d$),
\[
  \abs{u_0-u_j}=\frac{d-j}{dj}\le\frac1j,\qquad \abs{u_i-u_j}=\frac{\abs{j-i}}{ij}.
\]
Hence $\prod_{j\ne i}\abs{u_0-u_j}\le\prod_{j\ne i}\tfrac1j=\tfrac{i}{t!}$ and
$\prod_{j\ne i}\abs{u_i-u_j}=\tfrac{(i-1)!\,(t-i)!}{i^{\,t-2}\,t!}$, so
\[
  L_i\le\frac{i/t!}{(i-1)!\,(t-i)!/(i^{\,t-2}t!)}=\frac{i^{\,t-1}}{(i-1)!\,(t-i)!}
  =\frac{i^{\,t-1}}{(t-1)!}\binom{t-1}{i-1}.
\]
Summing, with $i^{\,t-1}\le t^{\,t-1}$ and $\sum_{i=1}^t\binom{t-1}{i-1}=2^{\,t-1}$,
\begin{equation}\label{eq:sumL}
  \sum_{i=1}^t L_i\ \le\ \frac{t^{\,t-1}}{(t-1)!}\cdot 2^{\,t-1}.
\end{equation}

The weighted evaluation norm $\sum_i i^\beta L_i$ gives the first bound in
\eqref{eq:weighted-profile}. The simpler constant uses only its $\ell_1$ relaxation; the signs of the
Lagrange weights and the exact extremizing polynomial are irrelevant for that estimate.

\emph{Assemble.} Since $m\le V(d)$ and $s/m\ge1$, \eqref{eq:Vd-bound} first gives
\[
  \frac{s}{m}
  \ \ge\
  \max\left\{1,\frac{d^\beta}{\sum_i i^\beta L_i}\right\}.
\]
Because $i^\beta\le t^\beta$, \eqref{eq:sumL} and
$t!\ge(t/e)^t$ (from $e^t\ge t^t/t!$) give $(t-1)!\ge t^{\,t-1}/e^t$ and therefore
\[
  \frac{s}{m}
  \ \ge\
  \max\left\{1,
  \frac{(t-1)!}{2^{t-1}t^{t-1}}\left(\frac dt\right)^\beta\right\}
  \ \ge\
  \max\left\{1,2(2e)^{-t}\left(\frac dt\right)^\beta\right\}.
\]
At $t=1$, positivity forces $Q(k)=a_1k$ with $a_1>0$, so $V(k)=a_1k^{1-\alpha}$ and the exact ratio is
$d^{\alpha-1}$.

The estimate should be read backwards as follows. If all of the first $t$ profile values stay below $s$, they
determine the low-degree part $R$ and force the diagonal value $V(d)$ to pay the factor $(t/d)^\beta$, up to
the weighted interpolation norm. Once that weighted term exceeds one, some sampled values must separate. The
factorials quantify the cost of extrapolating from the first $t$ reciprocal scales to the last one. At
$\beta=0$, the example $Q(k)=k^t$ shows why the method correctly produces no dimension growth; for $\beta>0$
on a short grid, the maximum with one still cannot be discarded.
\end{proof}

\begin{proof}[Proof of \cref{lem:univariate}]
Specialize the preceding calculation to $\alpha=2t$, hence $\beta=t$. Dropping the maximum with $1$ gives
\[
  \frac{\max_kV(k)}{\min_kV(k)}
  \ge 2(2e)^{-t}\left(\frac dt\right)^t
  =2\left(\frac{d}{2et}\right)^t.
\]
Dropping the leading factor $2$ gives
$\big((2e)^{-1/2}\sqrt{d/t}\big)^{2t}$. At $t=1$, the exact statement in
\cref{lem:weighted-profile} gives ratio $d$; and the displayed coarse bound is at least $2$ whenever
$t\le d/(2e)$.
\end{proof}

\begin{proof}[Proof of the lower bound in \cref{thm:main}]
Take any nonnegative degree-$2t$ form that is positive away from the origin. By \cref{lem:reduction}, after
averaging we obtain a polynomial $Q$ with degree at most $t$, with $Q(0)=0$ and $Q(k)>0$ on $[d]$, such that the
form's ratio to the power $2t$ is at least
$\max_k k^{-2t}Q(k)/\min_k k^{-2t}Q(k)$. By \cref{lem:univariate}, this is at least
$(c_1\sqrt{d/t})^{2t}$. Taking the infimum over all nonnegative forms gives
$\distnn_t(\Bone)\ge c_1\sqrt{d/t}$; the SoS lower bound follows from
$\distnn_t\le\dist_t$.

The scope of the conclusion is now transparent. The candidate's strict positivity away from the origin ensures
that the sampled values $V(k)$ are positive and that their ratio is meaningful. The argument never asks how
global nonnegativity is certified, which is why the lower bound survives the enlargement from SoS forms to all
nonnegative forms.
\end{proof}

\section{The construction and its facet generalization}
\label{sec:upper-alg}

\subsection{Averaging the facet orbit}

The oriented facet inequalities of the cross-polytope are $\varepsilon^\top x\le1$ for
$\varepsilon\in\{\pm1\}^d$. Equivalently, modulo $\varepsilon\sim-\varepsilon$, opposite pairs form the slabs
$|\varepsilon^\top x|\le1$. Because the power is even, averaging over all oriented normals merely duplicates
each opposite pair and is the canonical construction preserving every symmetry of the body. The Rademacher
variable below records the value of this orbit at the diagonal.

\LemUpper*

\begin{proof}
Let
\[
  f(x)=\E_{\varepsilon}[(\varepsilon^\top x)^{2t}],
  \qquad
  S=\set{x:\gauge{x}{1}=1}.
\]
The form is an average of squares of degree-$t$ forms and is therefore SoS. It is also invariant under signed
permutations.

\emph{Extrema on the $\ell_1$-sphere.}
It is enough to work on the simplex
$\Delta=\set{x\ge0:\sum_i x_i=1}$. Fix two coordinates, write them as
$(m+\delta,m-\delta)$, and freeze the others. If
$c_\varepsilon=\sum_{\ell\ne i,j}\varepsilon_\ell x_\ell$, then
\[
  g(\delta)
  =\E_\varepsilon\!\left[
    \big(c_\varepsilon+m(\varepsilon_i+\varepsilon_j)
      +\delta(\varepsilon_i-\varepsilon_j)\big)^{2t}
  \right]
\]
is convex in $\delta$. Swapping $\varepsilon_i$ and $\varepsilon_j$ shows that it is even, hence nondecreasing
in $\abs\delta$. Equalizing two coordinates can only decrease $f$, so $f$ is Schur-convex on $\Delta$. The
uniform point $d^{-1}\mathbf 1$ is majorized by every point in $\Delta$, while $e_1$ majorizes every such point.
Consequently,
\[
  \max_{x\in S}f(x)=f(e_1)=1,
  \qquad
  \min_{x\in S}f(x)
  =f(d^{-1}\mathbf 1)
  =d^{-2t}\E[\Rad_d^{2t}].
\]
This proves the exact ratio
\[
  \ratio(f^{1/(2t)})
  =\frac{d}{\big(\E[\Rad_d^{2t}]\big)^{1/(2t)}}.
\]

The two-coordinate calculation has a simple geometric reading. Moving $\ell_1$ mass from a larger coordinate
to a smaller one lowers the even signed moment. Repeating these elementary balancing transfers---often called
Robin Hood transfers---leads to the diagonal; reversing them leads to a vertex. This is why no other boundary
directions enter the exact ratio.

\emph{Moment estimates.}
The geometry has now been reduced to determining the scale of one moment. To obtain distortion
$\sqrt{d/t}$, this moment should have scale $(dt)^t$. Gaussian replacement gives its ceiling, while the
collision-free pairing terms already give its floor.
The Rademacher moment satisfies
\begin{equation}\label{eq:moment-sandwich}
  (2t-1)!!\,d^{\downarrow t}
  \ \le\ \E[\Rad_d^{2t}]
  \ \le\ (2t-1)!!\,d^t ,
\end{equation}
where $d^{\downarrow t}=d(d-1)\cdots(d-t+1)$. For the upper bound, replace the signs one at a time by standard
Gaussians. In the binomial expansion, the two variables have the same moments in degrees zero and two, while
every higher even Gaussian moment dominates its Rademacher counterpart. Iteration gives
\[
  \E[\Rad_d^{2t}]
  \le\E[N(0,d)^{2t}]
  =(2t-1)!!\,d^t.
\]
For the lower bound, expand the moment and retain only those monomials in which $t$ distinct indices each occur
twice. There are $(2t-1)!!\,d^{\downarrow t}$ such monomials, each with expectation one; all other surviving
monomials are nonnegative. This proves \eqref{eq:moment-sandwich}.

The upper half of \eqref{eq:moment-sandwich} gives
\[
  \ratio(f^{1/(2t)})
  \ge\frac{\sqrt d}{((2t-1)!!)^{1/(2t)}}.
\]
For a uniform upper bound, use
$(2t-1)!!\ge t!\ge(t/e)^t$ and
$d^{\downarrow t}=t!\binom dt\ge(d/e)^t$. The lower half of
\eqref{eq:moment-sandwich} then gives
\[
  \ratio(f^{1/(2t)})\le e\sqrt{d/t}.
\]
If $\binom t2<d$, the elementary inequality
$d^{\downarrow t}\ge d^t(1-\binom t2/d)$ gives the sharper estimate in
\cref{lem:upper}. Finally, when $t=o(\sqrt d)$,
$d^{\downarrow t}=d^t(1+o(1))$; together with
$((2t-1)!!)^{1/(2t)}=\sqrt{2t/e}\,(1+o(1))$, this yields the stated asymptotic value.

Thus the factor $\sqrt t$ gained over the quadratic approximation is already visible combinatorially: it is
the $2t$-th root of the number of pairings that survive in the moment expansion.
\end{proof}

\subsection{Optimizing a restricted construction}

The sign design suggests allowing nonuniform weights on the facet normals of a general polytope. Every term
$(a_i^\top x)^{2t}$ is at most one on the boundary, so the normalized construction has a built-in ceiling. The
quantity that remains to protect is its floor: maximize the smallest boundary value over the weights. This is
the origin of $\Phi_t$ and of the one-sided certificate $\designbd_t$.

The associated matrix
\[
  \Mt(w)=\sum_iw_i\ver_t(a_i)\ver_t(a_i)^\top
\]
is the information matrix for degree-$t$ polynomial features, which makes classical optimal-design tools
available. The objective here is nevertheless different from the full SoS problem: it optimizes the boundary
floor of one restricted family of facet-power forms.

\ThmAlgorithm*

\begin{proof}
Write $c_i(x)=(a_i^\top x)^{2t}$, so that
$f_{t,w}(x)=\inner{c(x)}{w}$.

\emph{\ref{it:concave}: concavity and soundness.}
The function
\[
  \Phi_t(w)=\min_{x\in\partial\Poly}\inner{c(x)}{w}
\]
is a pointwise minimum of linear functions of $w$, hence concave. The boundary is compact, so the minimum is
attained. Moreover $0\le c_i(x)\le1$ on $\partial\Poly$, which gives $0\le\Phi_t(w)\le1$.
In fact,
$|\Phi_t(w)-\Phi_t(v)|\le\norm{w-v}_1$, so $\Phi_t$ is continuous on the compact simplex and its maximum is
attained. At the uniform weight vector, spanning of the facet normals makes $f_{t,w}$ positive away from the
origin; compactness of $\partial\Poly$ then gives $\Phi_t(w)>0$. Thus $\designbd_t(\Poly)$ is finite.

Every $f_{t,w}$ is SoS. If
$m(w)=\min_{\partial\Poly}f_{t,w}$ and
$M(w)=\max_{\partial\Poly}f_{t,w}$, then $M(w)\le1$ and
\[
  \ratio_\Poly(f_{t,w}^{1/(2t)})
  =\left(\frac{M(w)}{m(w)}\right)^{1/(2t)}
  \le\Phi_t(w)^{-1/(2t)}.
\]
Taking infima proves
\[
  \distnn_t(\Poly)\le\dist_t(\Poly)
  \le\inf_w\ratio_\Poly(f_{t,w}^{1/(2t)})
  \le\designbd_t(\Poly).
\]
The last step has two potential sources of slack: it replaces the actual maximum $M(w)$ by the bound one, and
it restricts the construction to facet-power forms rather than all SoS forms.

\emph{\ref{it:mwu}: mirror ascent with an approximate oracle.}
Suppose $x_s\in\partial\Poly$ satisfies
$f_{t,w_s}(x_s)\le\Phi_t(w_s)+\delta$, and put
$g_{s,i}=c_i(x_s)$. For every $w\in\simplex_n$,
\[
  \Phi_t(w)
  \le\inner{g_s}{w}
  \le\Phi_t(w_s)+\inner{g_s}{w-w_s}+\delta.
\]
Thus $g_s$ is a $\delta$-supergradient, and
$\norm{g_s}_\infty\le1$. For the exponential update, any $w\in\simplex_n$ satisfies
\[
  D_{\mathrm{KL}}(w\Vert w_{s+1})-D_{\mathrm{KL}}(w\Vert w_s)
  =-\eta\inner{g_s}{w}+\log\sum_iw_{s,i}e^{\eta g_{s,i}}
  \le-\eta\inner{g_s}{w-w_s}+\frac{\eta^2}{8},
\]
For the last inequality, the second derivative of the log moment is the variance of a variable in $[0,1]$,
hence at most $1/4$; integrating twice gives the remainder $\eta^2/8$. Telescoping from the uniform distribution
and taking $\eta=\sqrt{2\log n/T}$ gives, for $n\ge2$,
\[
  \sum_{s=1}^T\inner{g_s}{w-w_s}
  \le\frac{\log n}{\eta}+\frac{\eta T}{8}
  \le\sqrt{2T\log n}
  \qquad\text{for every }w\in\simplex_n
\]
for the normalized update in \cref{thm:algorithm}; this is the negative-entropy specialization of the standard
mirror-descent inequality~\citep[Sec.~4.6, Eq.~(4.10)]{Bubeck2015ConvexOpt}. Averaging the
approximate-supergradient inequalities and using concavity,
\[
  \max_w\Phi_t(w)-\Phi_t(\bar w_T)
  \le\sqrt{\frac{2\log n}{T}}+\delta.
\]
If the oracle errors vary with $s$, the same argument replaces $\delta$ by
$T^{-1}\sum_s\delta_s$; the case $n=1$ is immediate.

Geometrically, the minimizing boundary point exposes a direction that the current weights protect least well.
Its coordinates $c_i(x_s)$ measure how strongly each facet functional sees that direction, and the exponential
update reweights according to this exposure. The regret inequality certifies the average effect; it does not
assert that every individual update increases $\Phi_t$.

\emph{\ref{it:oracle}: a convex representation of the oracle.}
The equality constraint $\gauge{x}{\Poly}=1$ makes the boundary search appear nonconvex. Homogeneity removes
that fixed scale: the minimum of one norm on the boundary of another is the reciprocal of a maximum over a norm
ball. Since the polytope gauge is itself the maximum of the facet functionals, the problem then separates into
finitely many dual-norm evaluations.
For positive weights, $h_w=f_{t,w}^{1/(2t)}$ is a norm because the facet normals span $\R^d$. By homogeneity,
\[
  \min_{\gauge{x}{\Poly}=1}h_w(x)
  =
  \left(\max_{h_w(x)\le1}\gauge{x}{\Poly}\right)^{-1}
  =
  \left(\max_i h_w^*(a_i)\right)^{-1}.
\]
Raising to the power $2t$ proves the identity in \cref{thm:algorithm}. Each dual norm value is the convex
program
\[
  h_w^*(a_i)=
  \max_x\set{a_i^\top x:\sum_jw_j(a_j^\top x)^{2t}\le1}.
\]
If $i$ and $x$ attain the two maxima, then
$h_w(x)=1$ and dual H\"older gives
$\abs{a_j^\top x}\le h_w^*(a_j)\le h_w^*(a_i)=a_i^\top x$ for every $j$. Hence
$\gauge{x}{\Poly}=a_i^\top x$, and
$x/\gauge{x}{\Poly}$ is a minimizing boundary point. The reduction therefore gives both the oracle value and an
oracle direction.

\emph{\ref{it:ccly}: the quadratic case.}
At $t=1$, every object in the construction should reduce to classical ellipsoidal design. Indeed, the dual
norm values below become leverage scores, and Kiefer--Wolfowitz closes the loop.
When $t=1$, $h_w(x)^2=x^\top M(w)x$ and
$h_w^*(a_i)^2=a_i^\top M(w)^{-1}a_i$. Hence
\[
  \Phi_1(w)=
  \frac{1}{\max_i a_i^\top M(w)^{-1}a_i}.
\]
For every nonsingular design, the sensitivities
$\ell_i(w)=a_i^\top M(w)^{-1}a_i$ satisfy
\[
  \sum_iw_i\ell_i(w)=\tr\!\big(M(w)^{-1}M(w)\big)=d,
\]
so $\max_i\ell_i(w)\ge d$. Conversely, let $w^\star$ maximize $\log\det M(w)$. The directional derivative
toward the $i$th simplex vertex is $\ell_i(w^\star)-d\le0$, hence every sensitivity is at most $d$.
Concavity of $\log\det$ proves the converse as well: any design with $\max_i\ell_i(w)\le d$ is $D$-optimal.
This is the finite-design proof of the Kiefer--Wolfowitz equivalence theorem~\citep[p.~364]{KieferWolfowitz1960}.
If $M(w)$ is singular, a nonzero kernel vector can be scaled to $\partial\Poly$ and gives $\Phi_1(w)=0$;
therefore every maximizer is nonsingular. We have proved
\[
  \min_{w\in\simplex_n}\max_i a_i^\top M(w)^{-1}a_i=d
\]
and identified its minimizers with the $D$-optimal designs. If $w^\star$ is one of them, then
\[
  E_{w^\star}=\set{x:d\,x^\top M(w^\star)x\le1}
\]
is the symmetric John ellipsoid of $\Poly$. Indeed, writing an ellipsoid as
$E_X=\set{x:x^\top X^{-1}x\le1}$, the primal--dual KKT relation is
$X^{-1}=dM(w^\star)$. The sensitivity inequalities give feasibility for
$X=(dM(w^\star))^{-1}$, and the identity above forces equality whenever $w_i^\star>0$; the multipliers
$d w_i^\star$ therefore give complementary slackness.

In simplex normalization, the iteration of \citet{CohenCousinsLeeYang2019} is
\[
  w_i^+
  =\frac{w_i\,\ell_i(w)}{d},
  \qquad
  \ell_i(w)=a_i^\top M(w)^{-1}a_i.
\]
Their weights sum to $d$; the displayed formula follows after the change of variables $v=dw$.
It optimizes the same $D$-optimal design problem, but it uses the log-determinant sensitivity $\ell_i(w)$ rather
than the worst-direction supergradient $c_i(x_s)$ above. The optimizer is shared; the update is not.
\end{proof}

\subsection{What the facet model misses}

The generalization above creates a tempting but false inference: perhaps optimizing the symmetric facet-power
family already optimizes over all SoS forms. The smallest nonquadratic case, $d=t=2$, is enough to disprove it.

\PropDesignGap*

\begin{proof}
For the square, consider
\[
  p(x,y)=x^4+y^4-x^2y^2=(x^2-y^2)^2+(xy)^2.
\]
This is an SoS form. On the boundary of $B_\infty^2$, one coordinate has squared magnitude one; if the other has
squared magnitude $z\in[0,1]$, then
\[
  p=1-z+z^2\in[3/4,1].
\]
Therefore $\dist_2(B_\infty^2)\le(4/3)^{1/4}$.

The two opposite facet pairs of the square are represented by $e_1,e_2$. For
$w=(w_1,w_2)$,
\[
  \Phi_2(w)=\min_{\max\{|x|,|y|\}=1}(w_1x^4+w_2y^4)
  =\min\{w_1,w_2\}.
\]
Its maximum over the simplex is $1/2$, so
$\designbd_2(B_\infty^2)=2^{1/4}$.
Moreover, the boundary point $(1,1)$ gives $w_1+w_2=1$ for every $w$, so the boundary maximum is exactly one.
Thus $2^{1/4}$ is also the best actual ratio inside this facet-power family; the gap below is not caused by the
ceiling relaxation.

Finally, the linear map $(x,y)\mapsto(x+y,x-y)$ sends the $\ell_1$-norm to the $\ell_\infty$-norm. Distortion
and the corresponding facet certificate are invariant under invertible linear changes of variables. Indeed, if
$A$ is invertible and $b_i=A^{-\top}a_i$ are the transformed facet normals, then
\[
  \sum_iw_i(b_i^\top Ax)^{2t}=\sum_iw_i(a_i^\top x)^{2t}.
\]
The map $x\mapsto Ax$ therefore preserves the boundary extrema and $\designbd_t$, while polynomial pullback
gives the same invariance for $\dist_t$. This yields the same strict gap for $B_1^2$. Explicitly, the transformed
SoS form is
$x^4+14x^2y^2+y^4$.

The conceptual point is the mixed term. A nonnegative mixture of coordinate facet powers cannot use a negative
$x^2y^2$ coefficient, whereas the full SoS cone can: the interaction is absorbed by the square decomposition.
Symmetry controls which terms may appear, but it does not collapse these two cones.
\end{proof}

\begin{remark}[What symmetry still gives]
For the cube, symmetry gives
$\designbd_t(\Binf)=d^{1/(2t)}$, but \cref{prop:design-gap} shows that this need not equal
$\dist_t(\Binf)$. For the cross-polytope, concavity and group averaging make the uniform sign design optimal for
$\designbd_t$; hence, for $1\le t\le d$,
\[
  \designbd_t(\Bone)
  =\frac{d}{(\E[\Rad_d^{2t}])^{1/(2t)}}
  =\Theta(\sqrt{d/t}).
\]
In this same range, the main theorem makes this certificate order-optimal, not exactly optimal, over the full
SoS cone.
\end{remark}

\section{Discussion}
\label{sec:discussion}

\paragraph{What the proof teaches.}
The cross-polytope obstruction is carried by orbit geometry, not by a particular representation of
nonnegativity. After averaging, dimension appears as the number of support scales
$k=1,\ldots,d$ on which a candidate must remain well balanced; degree appears as the complexity of the profile
available to balance them. The flat points are effective not because they are known to contain every extremum,
but because they make invariant forms forget almost everything: each restriction is
$k^{-2t}Q(k)$ with $\deg Q\le t$. Interpolation then converts the mismatch between $d$ scales and a degree-$t$
numerator into the factor $\sqrt{d/t}$.

This view also explains the strongest scope feature of the theorem. Once the sampled values are positive, the
proof no longer sees whether the form is a sum of squares, a sum of powers, or merely globally nonnegative. The
lower bound is therefore representation-free. The construction side uses the same symmetry differently: the
facets form one sign-vector orbit, whose uniform power average produces the Rademacher form. The lower and
upper bounds are thus two uses of one group---simplify every adversary, then generate one canonical match.

\paragraph{A portable diagnostic.}
The weighted form of \cref{lem:weighted-profile} separates the interpolation argument from the geometry that
produces its exponent. To test another symmetric body, one may look for three ingredients: a group average that
does not increase the ratio; boundary representatives indexed by a support parameter $k\in[d]$; and a collapse
of every averaged degree-$2t$ form to
\[
  V(k)=k^{-\alpha}Q(k),
  \qquad \deg Q\le t,
  \qquad Q(0)=0.
\]
If these ingredients hold and $\alpha\ge t$, positivity on the representatives alone forces
\[
  \frac{\max_kV(k)}{\min_kV(k)}
  \ge 2(2e)^{-t}\left(\frac dt\right)^{\alpha-t}.
\]
This is a test rather than a universal theorem about symmetric bodies: the body-specific work is precisely to
find an orbit family on which the invariant algebra collapses in this way. For $B_1^d$, $\alpha=2t$; for the
$\ell_p$ witnesses, $\alpha=2t/p$. Thus the same lemma both proves the cross-polytope rate and locates the
$p=2$ stopping point.

\paragraph{A correct-order certificate need not be an optimizer.}
The cube and the facet program give the same warning in two forms. The coordinate certificate
$\sum_i x_i^{2t}$ proves the cube--cross-polytope separation, but the quartic example in
\cref{prop:design-gap} shows that it is not always optimal. Likewise, depending on the body, $\designbd_t$ may
be looser than the best ratio in its own facet-power family; independently, that restricted family may be
looser than the optimum over the full SoS cone.

For a weight vector $w$, the actual facet-form ratio is
\[
  \left(\frac{M(w)}{m(w)}\right)^{1/(2t)},
  \qquad
  m(w)=\min_{\partial\Poly}f_{t,w},
  \quad
  M(w)=\max_{\partial\Poly}f_{t,w},
\]
whereas $\designbd_t$ optimizes $m(w)^{-1/(2t)}$ after using $M(w)\le1$. The normalization is one possible
source of slack, and restricting to facet powers creates a second; \cref{prop:design-gap} proves that the latter
occurs already for a symmetric quartic example. Indeed, for the square $M(w)=1$ for every $w$, since the
boundary point $(1,1)$ gives $w_1+w_2=1$; the strict gap in \cref{prop:design-gap} therefore comes from the
restricted form family, not from the ceiling relaxation.

The normalization slack also occurs by itself. For the regular hexagon described by three unit normals at
angles $0,\pi/3,2\pi/3$, uniform weights give information matrix $I/2$ and boundary floor $1/2$. By
\cref{thm:algorithm}, this design is optimal for the boundary-floor program, so $\designbd_1=\sqrt2$. Its
boundary maximum is $2/3$, however, and the same facet form has actual ratio $2/\sqrt3<\sqrt2$. Dihedral
averaging cannot increase a quadratic sandwich ratio and sends every quadratic form to a scalar multiple of
$\norm{x}_2^2$, so this Euclidean form is also optimal over the full quadratic cone. The example therefore
isolates the ceiling relaxation. The value of the program is that its guarantee, optimizer, oracle, and failure
mode are all explicit; it should not be identified with $\dist_t$.

The quadratic endpoint is a useful consistency check, with one similar qualification. Maximizing $\Phi_1$ is
the $G$-optimal problem and produces a $D$-optimal design; after the factor-$d$ scaling in
\cref{thm:algorithm}, its information matrix defines the John ellipsoid. But
$\designbd_1(\Poly)=\sqrt d$ universally, while a particular body may have a better surrounding ellipsoidal
ratio. What survives exactly at $t=1$ is the optimizer and the inscribed ellipsoid, not equality among every
approximation objective.

\paragraph{Sharp constants: which side must move?}
For $t\to\infty$ with $t=o(\sqrt d)$, the sign design has asymptotic constant $\sqrt{e/2}$, whereas the
interpolation lower bound has constant $1/\sqrt{2e}$. The factor-$e$ gap has two plausible sources, and the
present argument does not choose between them. On the lower side, we bound only the $\ell_1$ norm of the
Lagrange evaluation functional and then relax a factorial. On the upper side, the quartic design gap warns that
the uniform sign form may not be optimal in the full SoS cone. A sharper grid extremal inequality would move
the first side; a construction using mixed SoS interactions would move the second. Determining which happens
is the sharp-constant problem exposed by the proof.

\paragraph{The cube: determine the optimum, not another certificate.}
For the cube we know
$\dist_t(B_\infty^d)\le d^{1/(2t)}$, which is enough for logarithmic-degree constant approximation. The next
question is the order of the full SoS optimum for fixed $t>1$. The two-dimensional quartic improvement rules
out a proof that simply declares the coordinate certificate optimal by symmetry. Any successful lower bound
must control the mixed invariant terms that facet powers omit.

\paragraph{Beyond $p=2$: replace the exponent-sensitive step.}
For $1\le p\le2$, the flat-point exponent lies on the side where the same interpolation inequality yields a
lower bound. At $p>2$ it crosses the degree of the numerator, and the direction of that comparison no longer
controls oscillation. This is a structural stopping point of the proof, not a prediction of the answer. A
concrete first target is the interval $2<p<4$ at quartic degree, where the resonance at $p=4$ and the exact
quadratic law provide two anchors. Progress there requires a replacement for the exponent-sensitive
interpolation step, not more bookkeeping within the present estimate.

\paragraph{Computation: protect a relative objective.}
The outer mirror-ascent guarantee is additive in $\Phi_t$. The inner problem has an exact convex dual-norm
formulation, but solving its $n$ convex programs may dominate the running time, and the map
$\Phi_t\mapsto\Phi_t^{-1/(2t)}$ is ill-conditioned when the optimum is small. A useful algorithmic target is a
relative-error guarantee under a quantitative lower bound on $\max_w\Phi_t(w)$, together with an oracle whose
complexity tracks the conditioning of the induced norm. These questions concern the certificate program; they
do not affect the geometric lower bound.

The research map is therefore modular: better constants require a sharper profile inequality or a new SoS
construction; the exact cube problem requires control beyond facet powers; the region $p>2$ requires a new
witness mechanism; and faster computation requires conditioning-aware dual norms. The proof does not merely
leave these problems open; it locates each one at the step where the present framework loses the information a
sharper result would need.

\paragraph{AI Disclosure.}
The author acknowledges the use of GPT-5.6 Solar as an assistive tool in preparing this manuscript,
including support in drafting, revising, and polishing the exposition. All AI-assisted material was
carefully reviewed and edited by the author, who takes full responsibility for the final manuscript
and its results.

\bibliographystyle{plainnat}   
\bibliography{refs}

\appendix
\section{Deferred proofs}
\label{app:proofs}

\subsection{Assembly of the main theorem}

\ThmMain*

\begin{proof}
The lower-bound argument at the end of \cref{sec:lower} applies to every globally nonnegative form, and therefore
gives
\[
  \frac1{\sqrt{2e}}\sqrt{\frac dt}
  \le\distnn_t(\Bone)
  \le\dist_t(\Bone).
\]
Barvinok's theorem supplies the first SoS upper bound in \eqref{eq:main-bound}, and \cref{lem:upper} supplies the
second. Finally,
\[
  \binom{d+t-1}{t}
  \le\left(\frac{e(d+t-1)}t\right)^t
  \le\left(\frac{2ed}t\right)^t
\]
for $1\le t\le d$, which proves the last inequality in \eqref{eq:main-bound}. The more precise asymptotic value
of the sign design was established in \cref{sec:upper-alg}.

For the first consequence, the universal SoS upper bound is
$O(\sqrt{d/t})$ for every symmetric body, while the cross-polytope has the matching lower bound even over the
larger nonnegative cone.

For the second consequence, if $t\le d$ and either distortion is at most $C$, then
$t\ge d/(2eC^2)$; if $t>d$, this lower bound on $t$ is automatic. Conversely, the all-degree Barvinok bound
implies that, for every fixed $C>1$, some
$t=O_C(d)$ gives SoS distortion at most $C$, and hence also nonnegative-form distortion at most $C$. To see the
linear dependence directly, the case $d=1$ being immediate, take $t=\lceil cd\rceil$ for $d\ge2$ and use
\[
  \binom{d+t-1}{t}
  =\binom{d+t-1}{d-1}
  \le\left(\frac{e(d+t-1)}{d-1}\right)^{d-1};
\]
the $1/(2t)$ power tends to one as the constant $c$ tends to infinity.
\end{proof}

\subsection{The polar-body comparison}

\CorSeparation*

\begin{proof}
The cube certificate in \cref{fact:cube-upper} gives
\[
  t_C^*(\Binf)
  \le\left\lceil\frac{\log d}{2\log C}\right\rceil.
\]
For the cross-polytope, the lower bound in \cref{thm:main} gives
$t_C^*(\Bone)\ge d/(2eC^2)$, while the all-degree Barvinok upper bound used above gives
$t_C^*(\Bone)=O_C(d)$. Thus $t_C^*(\Bone)=\Theta_C(d)$.

If $t=\lceil\log_2d\rceil$, then $d^{1/(2t)}\le\sqrt2$, so the cube bound follows from
\cref{fact:cube-upper}; the cross-polytope bound is \eqref{eq:main-bound}.
\end{proof}

\subsection{\texorpdfstring{The $\ell_p$ family}{The ell-p family}}

\ThmEllp*

\begin{proof}
\emph{\ref{it:reson}: resonances.}
Let $p=2j$ and suppose $j\mid t$. Then
\[
  q(x)=\left(\sum_i x_i^{2j}\right)^{t/j}
      =\gauge{x}{p}^{2t}.
\]
The form is SoS: expanding the integer power of
$\sum_i(x_i^j)^2$ expresses it as a sum of squares of degree-$t$ monomials. Its ratio is one, which is the
smallest possible value for either polynomial cone.

\emph{\ref{it:t1law}: degree two.}
At $t=1$, a nonnegative quadratic form is SoS. Averaging over signed permutations does not increase the ratio and
turns its matrix into a scalar multiple of the identity. The ratio of the Euclidean norm to the $\ell_p$ norm on
the $\ell_p$-sphere is
$d^{\abs{1/2-1/p}}$, proving the stated equality.

\emph{\ref{it:plaw}: the range $1\le p\le2$.}
Average an arbitrary nonnegative form over signed permutations and evaluate it at
\[
  x^{(k)}=k^{-1/p}(\underbrace{1,\ldots,1}_{k},0,\ldots,0).
\]
Since
$p_{2r}(x^{(k)})=k^{1-2r/p}$, the resulting flat-point values have the form
\[
  V(k)=k^{-\alpha}Q(k),
  \qquad
  \alpha=\frac{2t}{p},
  \qquad
  \deg Q\le t,\quad Q(0)=0.
\]
This is the same support-size profile as for the cross-polytope, with one controlled change: the geometric
normalization replaces the exponent $2t$ by $\alpha=2t/p$. The inequality $p\le2$ is exactly the condition
$\alpha\ge t$ required by \cref{lem:weighted-profile}. Applying that lemma gives
\[
  \frac{\max_kV(k)}{\min_kV(k)}
  \ge
  2(2e)^{-t}\left(\frac dt\right)^{\alpha-t}.
\]
After taking the $2t$-th root and dropping the harmless factor $2^{1/(2t)}$,
\[
  \distnn_t(\Bp)
  \ge\frac1{\sqrt{2e}}
       \left(\frac dt\right)^{1/p-1/2}.
\]
The SoS form $(\sum_i x_i^2)^t$ has $2t$-th root $\norm{x}_2$, whose ratio on the $\ell_p$-sphere is
$d^{1/p-1/2}$. This proves the two-sided chain. If $p<2$ is fixed and
$\log t=o(\log d)$, the logarithms of the lower and upper bounds are both
$(1/p-1/2+o(1))\log d$, proving the final assertion.
\end{proof}

\begin{remark}
The three parts of \cref{thm:ellp} arise from three different mechanisms: exact algebraic representation at the
resonance points, uniqueness of the invariant quadratic at $t=1$, and profile interpolation for $p\le2$. They
should not be read as samples from a single smooth formula.

The upper bound for $1<p<2$ does not recover the $t$-dependence in the lower bound. A matching construction would
require an analogue of the sign design adapted to $\ell_p$. For $p>2$, even the correct power of $d$ away from
the resonance points remains open.
\end{remark}

\end{document}